\theoremstyle{plain}
\newtheorem{theorem}{Theorem}[section]
\newtheorem{corollary}[theorem]{Corollary}
\theoremstyle{definition}
\newtheorem{definition}[theorem]{Definition}
\newtheorem{example}[theorem]{Example}
\def\keywords{%
\list{}{\advance\topsep by0.35cm\relax\small\rm
 \leftmargin=1cm
 \itemindent\listparindent
 \rightmargin\leftmargin}\item[\hskip\labelsep\bf Keywords: ]}
\theoremstyle{remark}
\renewenvironment{proof}{\noindent{\it Proof}.}{\qed}
\title{On the Complexity of Coordinated Table Selective Substitution Systems}  
\author{Liliana Cojocaru\\
Faculty of Computer Science\\ 
Alexandru Ioan Cuza University\\ 
liliana.cojocaru@info.uaic.ro} 
\date{}
\begin{document}
\thispagestyle{empty}
\maketitle
\thispagestyle{empty}
\pagenumbering{arabic}

\begin{abstract}
We investigate computational resources used by {\it Turing machines} (TMs) and {\it alternating 
Turing machines} (ATMs) to accept languages generated by {\it coordinated table selective 
substitution systems} with two components. We prove that the class of languages generated by 
{\it real-time ($RL$; $0S$)-systems}, an alternative device to generate {\it $\lambda$-free labeled 
marked Petri net} languages, can be accepted by nondeterministic TMs in ${\cal O}(\log n)$ space 
and ${\cal O}(n\log n)$ time. Consequently, this proper subclass of Petri nets languages (known 
also as ${\cal L}$-languages) is included in $NSPACE$($\log n$). The class of languages 
generated by {\it ($RL$; $RB$)-systems} for which the nonterminal alphabet of the $RL$-grammar is 
composed of only one symbol and the nonterminal alphabet of the $RB$-grammar is composed of 
two symbols, can be accepted by ATMs in ${\cal O}(\log n)$ time and space. Consequently, this proper 
subclass of {\it one counter languages} generated by {\it one counter machines with only one control 
state} is included in $U_{E^*}$-uniform ${\cal N}{\cal C}^1$,  hence in $SPACE$($\log n$).    

\end{abstract}
\vspace*{-0.7cm}
\begin{keywords}
coordinated table selective substitution systems, Petri nets, one counter languages, alternating 
Turing machines, $NSPACE$($\log n$), ${\cal N}{\cal C}^1$  
\end{keywords}

\section{Introduction}
{\it Coordinated table selective substitution systems} ({\it cts systems}) were 
introduced in \cite{R2} as an alternative framework to simulate grammars and automata. At 
the basis of cts systems stands the notion of {\it selective substitution grammar} ({\it s-grammar}) 
which is a unifying model for several rewriting systems and grammars \cite{R1}, \cite{R2}. The main 
idea in an s-grammar is to substitute various kinds of rules by only context-free (CF) rules, that  
at a certain step of derivation must rewrite a certain combination of symbols occurring in the sentential 
form. This combination is provided by the so called {\it selector} of the s-grammar, built in such a way 
it simulates the respective rewriting formalism. Informally, an s-grammar $G$ is composed of a set of CF 
productions $P$ and a selector $K$, which is a language over an alphabet $\Sigma\cup \bar\Sigma$, where 
$\Sigma$ is the alphabet of $G$ and $\bar\Sigma=\{\bar a| a\in \Sigma\}$, $\Sigma\cap \bar \Sigma=\emptyset$. A word $y\in \Sigma^*$ 
is derived from $x\in \Sigma^*$, if and only if there exists $z\in K \subseteq (\Sigma\cup \bar \Sigma)^*$ such that $x=z$, after replacing 
all letters in $\bar\Sigma$ occurring in $z$ by their unbarred counterparts in $\Sigma$, and each barred letter in $z$ is rewritten in the 
usual manner by rules in $P$. This is a derivation step in an s-grammar $G$. The language of $G$ is the set of all words derived from a fixed 
symbol $S$, called the {\it axiom} of $G$, by iteratively applying the derivation step described above. If $K\subseteq \Sigma^*\bar\Sigma\Sigma^*$, 
then $K$ is called  {\it sequential}. The sequential selectors we are concerned with are the {\it right-boundary selector} $K=\Sigma^*\bar \Sigma$ 
and the {\it 0-sequential selector} $K=\Sigma^*\bar\Sigma \Sigma^*$. 

 A cts system \cite{AR} consists of $n$ s-grammars  $G_1,...,G_n$ and a set $R\subseteq P_1\times...\times P_n$ of rewrites. Each $P_i$,  $1\hspace*{-0.1cm}\leq i\hspace*{-0.1cm}\leq n$,  is the set of productions of $G_i$. 
An $n$-tuple $(y_1,...,y_n)$ can be derived from $(x_1,...,x_n)$, where $x_i$ and $y_i$ are words 
over the alphabet of $G_i$, if there exists a rewrite $r\hspace*{-0.1cm}=(r_1,...,r_n)\hspace*{-0.1cm}\in R$ such that each $y_i$ can be derived from $x_i$ in $G_i$ using $r_i$. The language of the cts system consists 
of all words generated on the first component, such that the system starts the computation with a fixed $n$-tuple 
of words called the {\it system axiom}, and ends up the computation in an $n$-tuple composed of empty words 
excepting the first component (containing the system's word). Cts systems with one component ($n=1$) simulate 
the work of grammars, while cts systems with $n\geq 2$ components simulate automata machines. 
Cts systems with two coordinates that uses a right-boundary selector, denoted by ($RL$; $RB$)-systems, 
are an alternative model for pushdown automata. Cts systems with two coordinates using a $0$-sequential 
selector, denoted by ($RL$; $0S$)-systems, are an alternative model for generating Petri nets (PN) languages. 
The computational power of cts systems and their relation to classes of languages in the Chomsky hierarchy 
have been studied in \cite{EHR2},  \cite{KR}, and \cite{R2}. However, in none 
of these papers the authors investigate the computational complexity of these systems, since this may 
have interesting consequences on the complexity of some subclasses of languages in the Chomsky 
hierarchy (e.g. one counter languages). In this paper we try to cover this gap.  

We investigate computational resources used by \textit{(alternating) Turing machines} 
to accept languages generated by ($RL$; $0S$) and ($RL$; $RB$)-systems. We prove in Section 3 that 
real-time ($RL$; $0S$)-languages can be accepted by  nondeterministic on-line TMs in 
${\cal O}(\log n)$ space. Consequently, the class of languages generated by 
$\lambda$-free labeled marked PN (known as ${\cal L}$-languages \cite{H}) is included in $NSPACE$($\log n$). 
This shows the existence of a substantial gap between the general ($EXPSPACE$-hard) PN languages, known as 
${\cal L}^\lambda_0$-languages \cite{H}, and ${\cal L}$-languages. The class of \textit{one counter languages} 
can be generated by ($RL$; $RB$)-systems for which the nonterminal alphabet of the $RB$-grammar is composed 
of only two symbols, the axiom and a working symbol. Denote by ($RL$; $RB_c$) these systems. 
In Section 4  we provide an alternative proof, via ($RL$; $RB_c$)-systems, of the inclusion 
of one-counter languages in $NSPACE$($\log n$). If we further restrict the $RL$-grammar to be 
composed of only one nonterminal then the resulting ($RL$; $RB$)-systems are equivalent with \textit{one counter} machines with \textit{one control state}. We prove that the class of languages generated by one state one counter machines can be accepted by \textit{indexing} ATMs in ${\cal O}(\log n)$ time and space. Since $ALOGTIME$, the class of languages recognizable by indexing ATMs in logarithmic time, equals $U_{E^*}$-uniform $\cal N$$\cal C$$^1$ \cite{BDG}, \cite{R} we conclude that this proper subclass of one counter languages belongs to \textit{$\cal N$$\cal C$$^1$}, hence to $SPACE$($\log n$). 
In Section 5 we provide several examples of languages generated by ($RL$; $RB_c$)-systems.

\section{Prerequisites}  

We introduce the main concepts related to ($RL$; $0S$) and ($RL$; $RB$)-systems. We assume the 
reader to be familiar with basic notions of automata and formal language theory \cite{Sa}. For 
an alphabet $\Sigma$, $\Sigma^*$ denotes the free monoid generated by $\Sigma$, $\lambda$ is the 
empty string, $|x|_a$ denotes the number of occurrences of the letter $a$ in the string 
$x\in \Sigma^*$, $|\Sigma|$ is the cardinality of $\Sigma$, while $|x|$ the length of 
$x\in \Sigma^*$. We denote by $id_\Sigma$ the homomorphism from $(\Sigma\cup \bar \Sigma)^*$ into 
$\Sigma^*$, defined by $id_\Sigma(\bar a)=a$ and $id_\Sigma(a)=a$, for all $a\in \Sigma$. From 
\cite{AR} and \cite{R2} we have  

\begin{definition}\rm
(1) A selector over $\Sigma$ is a subset of $(\Sigma\cup \bar\Sigma)^*\bar\Sigma(\Sigma\cup \bar\Sigma)^*$.\\ 
\hspace*{0.2cm}(2) A table is a triple $T\hspace*{-0.1cm}=(\Sigma, P, K)$, where $\Sigma$ is the alphabet of $T$, $P\subseteq \Sigma\times\Sigma^*$ 
is a finite nonempty set of productions, and $K$ is a selector over $\Sigma$.  \\ 
\hspace*{0.2cm}(3) Let $T\hspace*{-0.1cm}=(\Sigma, P, K)$ be a table. For $x, y\in \Sigma^*$ we say that $x$ directly derives $y$ in $T$, denoted by 
$x\Rightarrow_{T}y$, if $x=x_1...x_n$, $y=y_1...y_n$, $n\geq 1$, $x_i\in\Sigma$, $y_i\in\Sigma^*$, $1\leq i\leq n$, and there exists $z\in K$, 
$z=z_1...z_n$, $z_i\in\Sigma\cup\bar \Sigma$ such that $id_\Sigma(z)=x$ and, for $1\leq i\leq n$, if 
$z_i\in \Sigma$, then $x_i=y_i$, and if $z_i\in \bar\Sigma$, then $(x_i,y_i)\in P$.
\end{definition}

\begin{definition}\rm
\hspace*{-0.1cm}(1) A {\it right-linear grammar} (denoted $RL$-grammar) is a $5$-tuple $G=(\Sigma, P, S, \Delta, K)$, where  
$T=(\Sigma, P, K)$ is the table of $G$, $\Sigma=al(G)$ is the alphabet of $G$, $\Delta=tal(G)\subseteq \Sigma$ is the terminal alphabet of $G$, $\Sigma-\Delta=ntal(G)$ is the nonterminal alphabet of $G$, $S=ax(G)\in ntal(G)$ is the axiom of $G$, $P=prod(G)$ is the set of productions in $T$ such that to any $(X, \alpha)\in prod(T)$ it corresponds a rule 
of the form $X\rightarrow \alpha$, $X\in ntal(G)$, $\alpha \in \Sigma\cup \{\lambda\} \cup (tal(G) ntal(G))$,
$K=(tal(G))^*\overline{(ntal(G))}$. \\
\hspace*{0.2cm}(2) A {\it $0$-sequential grammar} (denoted $0S$-grammar) or a {\it right-boundary grammar} (denoted $RB$-grammar) is a 
$4$-tuple $G=(\Sigma, P, S, K)$, where $T=(\Sigma, P, K)$ is the table of $G$, $\Sigma=al(G)=ntal(G)$ is the alphabet of $G$, $S=ax(G)\in \Sigma$ is the axiom of $G$, $P=prod(G)$ is the set of productions in $T$ such that to any $(X, Y)\in P$ it corresponds a rule of the form  
$X\rightarrow Y$, $X\in \Sigma$, $Y \in \Sigma^*$, $K=\Sigma^*\bar \Sigma\Sigma^*$  for a $0S$-grammar, and respectively $K=\Sigma^*\bar {\Sigma}$ for an $RB$-grammar.
\end{definition}

\begin{definition}\rm
\hspace*{-0.1cm}(1) A {\it right-linear $0$-sequential system}, denoted ($RL$; $0S$)-system, 
is a triple $G=(G_1, G_2,R)$, where $G_1$ is an $RL$-grammar, $G_2$ is an $0S$-grammar, and $R=rew(G)\subseteq prod(G_1)\times prod(G_2)$ is 
the set of rewrites of $G$.\\
\hspace*{0.2cm}(2) A {\it right-linear right-boundary system}, abbreviated ($RL$; $RB$)-system, 
is a triple $G=(G_1, G_2,R)$, where $G_1$ is an $RL$-grammar, $G_2$ is an $RB$-grammar, and $R=rew(G)\subseteq prod(G_1)\times prod(G_2)$ is the 
set of rewrites of $G$.  
\end{definition}

\begin{definition}\rm   
Let $G=(G_1, G_2, R)$ be an $X$-system, $X\in \{(RL; 0S),(RL; RB)\}$, and $x=(x_1,x_2)$, $y=(y_1,y_2)\in (al(G_1))
^*\times (al(G_2))^*$.\\ 
\hspace*{0.2cm}(1) We say that $x$ {\it directly derives} $y$ in $G$ by using $r=(r_1,r_2)\in rew(G)$, 
denoted by $x\hspace{-0.1cm}\Rightarrow^r_G y$, if  $x_1\hspace{-0.1cm}\Rightarrow^{r_1}_{G_1} \hspace{-0.1cm}y_1$ and 
$x_2\hspace{-0.1cm}\Rightarrow^{r_2}_{G_2} \hspace{-0.1cm}y_2$. Let $\Rightarrow_G^*$ be the reflexive and 
transitive closure of $\Rightarrow_G$. If $x \Rightarrow_G^* y$, then we say that $y$ is derived from $x$ in $G$.\\   
\hspace*{0.2cm}(2) Let $(x,y)\in (al(G_1))^+\times(al(G_2))^+$. An $(x,y)$-derivation in $G$ is a sequence $\rho=\rho(0), ..., \rho(n)$ of elements in  $(al(G_1))^*\times(al(G_2))^*$, $n\geq 0$, such that $\rho(0)=(x,y)$ and for $1\leq i\leq n$, 
$\rho(i-1) \Rightarrow_G  \rho(i)$. In the sequel  $\rho$ is called an $n$-step derivation and $\rho(i-1) \Rightarrow_G  \rho(i)$ is called the $i^{th}$ step derivation of $\rho$. Each  $\rho(i)$, $0\leq i\leq n$, 
is called a {\it snapshot} of $\rho$.\\ 
\hspace*{0.2cm}(3) Let $\rho=\rho(0), ..., \rho(n)$, $n\geq 0$, be a derivation in $G$. 
We say that $\rho$ is a successful derivation if $\rho(0)=(ax(G_1),ax(G_2))$ and $\rho(n)=(w, \lambda)$ where $w \in \Delta^*$.\\ 
\hspace*{0.2cm}(4) The language generated by $G$ is defined as  
$L(G)=\{w\in (tal(G_1))^*| (ax(G_1), ax(G_2))\Rightarrow^*_G (w,\lambda)\}$. $L(G)$ is referred to as an $X$-language, $X\in \{(RL; 0S),(RL; RB)\}$.       
\end{definition}


The class of all $X$-languages is denoted by ${\cal L}(X)$, $X\in \{(RL; 0S),(RL; RB)\}$.    
The strings $x_1,y_1 \in (al(G_1))^*$, $x_2,y_2 \in (al(G_2))^*$ (in Definition 2.4) are called sentential 
forms (s.forms) of the $RL$-, $0S$- and respectively, $RB$-grammars. 

An $RL$-grammar $G$ is {\it real-time} if $prod(G)$ contains no chain rules, i.e., rules of the form 
$X\rightarrow Y$, where $X, Y\in ntal(G_1)$. Let $G=(G_1, G_2, R)$ be an $X$-system. $G$ is {\it real-time} 
if $G_1$ is real-time. The class of real-time $X$-languages is denoted by ${\cal L}_{rt}(X)$, 
$X\in \{(RL; 0S),(RL; RB)\}$.       

Henceforth, in any reference to an $X$-system $G=(G_1, G_2, R)$, $X\in \{(RL; 0S),(RL; RB)\}$, $R=rew(G)=\{r_1,...,r_k\}$ 
is considered to be the ordered set of productions in $prod(G_1)\times prod(G_2)$, $A_1=ax(G_1)$, 
$B_1=ax(G_2)$, $ntal(G_1)=\{A_1,...,A_{l_1}\}$ and $al(G_2)=ntal(G_2)=\{B_1,...,B_{l_2}\}$ 
the ordered set of nonterminals in $G_1$ and $G_2$, respectively. Each $r_i\in rew(G)$, $1\leq i \leq k$, 
is a pair of the form $(r_{i,G_1}, r_{i,G_2})$, where $r_{i,G_1}\in prod(G_1)$ and $r_{i,G_2}\in prod(G_2)$. 
Each rule $r_{i,G_1}$ is of the form $\alpha_{r_{i,G_1}}\rightarrow \beta_{r_{i,G_1}}$, 
$\alpha_{r_{i,G_1}}\in ntal(G_1)$ and $\beta_{r_{i,G_1}}\in (tal(G_1)\cup \{\lambda\})(ntal(G_1)\cup \{\lambda\})$. 
Each $r_{i,G_2}$ is of the form $\alpha_{r_{i,G_2}}\rightarrow \beta_{r_{i,G_2}}$, 
$\alpha_{r_{i,G_2}}\in ntal(G_2)$ and $\beta_{r_{i,G_2}}\in (ntal(G_2))^*$.  
The \emph{net effect} of rule $r_{i,G_2}$, with respect to $B_l\in ntal(G_2)$, is provided by $df_{B_l}(r_{i,G_2})=|\beta_{r_{i,G_2}}|_{B_l}-|\alpha_{r_{i,G_2}}|_{B_l}$, $1\leq l \leq l_2$. To each rule $r_{i,G_2}\in prod(G_2)$ we associate a vector  $V(r_{i,G_2}) \in $ \textbf{Z}$^{l_2}$ 
defined by $V(r_{i,G_2})=(df_{B_1}(r_{i,G_2}),df_{B_2}(r_{i,G_2}),..., df_{B_{l_2}}(r_{i,G_2}))$. 


\section{On the Complexity of ($RL$; $0S$)-Systems and Petri Nets} 

{\it Petri Nets} (PN) \cite{H}, \cite{J}, \cite{P}, equivalent with {\it Vector Addition Systems with States}  (VASS) \cite{Ho},  are a language modeling  framework to simulate reactive and distributed systems, with practical applications in concurrent business processes and economy, in chemical or biological processes. Formally we have 

\begin{definition}\rm
A labeled marked Petri net, abbreviated lmPN, is a $6$-tuple $\cal P$ $=(P, T, F, \Sigma,\\ l, M_0)$, where $P$ is the set of places, $T$ is the set of transitions, $F \subseteq (P\times T)\cup (T\times P)$ is the flow relation, $\Sigma$ is the alphabet, $l$ is the labeling function $l:T\rightarrow \Sigma\cup \{\lambda\}$ that assigns to each transition either a letter or $\lambda$, and $M_0$ is the initial marking (a marking of a PN is a function from $P$ to $\textbf{N}$). If to a lmPN we associate a finite set of (final) markings then the rezulted PN is called a labeled marked Petri net with final markings, abbreviated lmPNf. 
If  $l(t)\neq \lambda$ for each $t\in T$, then  $\cal P$ is a $\lambda$-free lmPN(f).  
\end{definition}

The language generated by a $\lambda$-free lmPN $\cal P$ is the set of all labeled firing-sequences of $\cal P$ 
from $M_0$ to any marking. The corresponding class of languages is denoted in \cite{H} and \cite{J} by ${\cal L}$.  
The language generated by a lmPN $\cal P$ is the set of all labeled firing-sequences of $\cal P$ from $M_0$ to some final markings. The class of languages generated by lmPN  with final markings is denoted in \cite{H} by ${\cal L}^{\lambda}_0$. From \cite{H} we have ${\cal L}\subset {\cal L}^{\lambda}_0$. The membership problem for ${\cal L}^{\lambda}_0$ is many-one equivalent to the reachability problem for VASS \cite{H}, which was proved to be $EXPSPACE$-hard \cite{L}, as a lower bound. Nowadays this problem is known to have a non-elementary complexity \cite{S}, i.e., a 
tower of exponentials of time of height depending elementarily on the size of the input, as lower bound provided by $TOWER=\textbf{F}_3$ \cite{C}, and a non-primitive recursive upper bound provided by $ACKERMANN = \textbf{F}_{\omega}$ \cite{LS1}. On the other hand the reachability problem for 1-VASS (VASS in dimension\footnote{By dimension of a VASS it is understood the number of counters of VASS.} one) with unary encoding is 
$NSPACE(\log n)$-complete \cite{ELP}. If binary encoding is used, then the reachability problem for 1-VASS is $NP$-complete \cite{HK}. We prove that, in spite of the ${\cal L}^{\lambda}_0$'s non-elementary complexity, the membership problem of ${\cal L}$, in binary encoding, has a sub-linear space complexity, i.e., ${\cal L}\subseteq NSPACE(\log n)$. We have 

\begin{theorem}\rm
Each language $L\in$ ${\cal L}_{rt}$($RL$; $0S$) can be recognized by an on-line nondeterministic Turing machine 
in ${\cal O}(\log n)$ space and ${\cal O}(n\log n)$ time (i.e., ${\cal L}_{rt}$($RL$; $0S$) $\subseteq NSPACE$($\log n$)).  
\end{theorem}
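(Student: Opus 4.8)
The plan is to exploit two structural features of a real-time $(RL; 0S)$-system $G=(G_1,G_2,R)$ and then build a straightforward on-line simulation. First I would record that the $0S$-component $G_2$ can be faithfully abstracted by the multiset of its nonterminal occurrences. Because the $0S$-selector $\Sigma^*\bar\Sigma\Sigma^*$ allows an occurrence of the left-hand nonterminal to be rewritten in \emph{any} position, both the enabledness of a rule $r_{i,G_2}$ and its effect depend only on the Parikh vector (marking) $M$ in \textbf{N}$^{l_2}$ of the current s.form of $G_2$, never on the ordering of its symbols. Thus a step using $r_{i,G_2}$ is enabled exactly when the $\alpha_{r_{i,G_2}}$-component of $M$ is positive, and it sends $M$ to $M+V(r_{i,G_2})$; the target $\lambda$ is precisely the zero marking. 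In this way the state of $G_2$ is a Petri-net/VASS marking whose entire history is determined by the sequence of chosen rewrites, and no information relevant to reaching $\lambda$ is lost by tracking only $M$.

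The crucial quantitative step is to bound the length of a successful derivation. Since $G_1$ is real-time, $prod(G_1)$ has no chain rules, so every joint step either applies $X\to aY$ or $X\to a$ (emitting one letter of the output word) or the terminating rule $X\to\lambda$. As each step consumes the unique nonterminal of $G_1$ and the only non-emitting rule, $X\to\lambda$, immediately terminates $G_1$, a successful derivation yielding $w$ with $|w|=n$ has at most $n+1$ steps. Consequently the total token count of $M$, which starts at $1$ (the axiom $B_1$) and increases by at most $\max_i|\beta_{r_{i,G_2}}|-1$ per step, stays in ${\cal O}(n)$; hence each of the $l_2$ counters is bounded by ${\cal O}(n)$, fits in ${\cal O}(\log n)$ bits, and the whole marking occupies ${\cal O}(\log n)$ space.

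With these observations the machine is immediate. I would build an on-line nondeterministic TM ${\cal M}$ that keeps on its work tape the current nonterminal of $G_1$ (a single symbol, constant space) together with the marking $M$ of $G_2$ (the ${\cal O}(\log n)$-bit counter vector). Reading $w$ left-to-right, at each step ${\cal M}$ guesses a rewrite $r=(r_{i,G_1},r_{i,G_2})\in rew(G)$ and checks that (i) $\alpha_{r_{i,G_1}}$ equals the stored $G_1$-nonterminal, (ii) the terminal of $\beta_{r_{i,G_1}}$, if any, matches the next input symbol, which it then consumes, and (iii) the $\alpha_{r_{i,G_2}}$-counter of $M$ is positive; it then replaces the stored nonterminal by the nonterminal of $\beta_{r_{i,G_1}}$ (or records that $G_1$ has terminated) and sets $M:=M+V(r_{i,G_2})$. ${\cal M}$ accepts iff the input is exhausted, $G_1$ has terminated, and $M$ is the zero vector. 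Since emitting rules require a matching next input symbol, once the input is consumed only the terminating rule $X\to\lambda$ can fire, so every run takes at most $n+1$ steps and the counters never leave ${\cal O}(n)$. By the multiset abstraction, an accepting run exists iff $(A_1,B_1)\Rightarrow_G^*(w,\lambda)$, i.e. iff $w\in L(G)$.

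For the resource bounds, the work tape holds only the constant-size $G_1$-symbol and the ${\cal O}(\log n)$-bit marking, giving ${\cal O}(\log n)$ space; the run has ${\cal O}(n)$ steps, each performing a bounded number of additions and comparisons on ${\cal O}(\log n)$-bit integers at cost ${\cal O}(\log n)$, for total time ${\cal O}(n\log n)$. The main obstacle, and the point where the hypotheses are genuinely used, is the derivation-length bound of the second paragraph: without the real-time restriction $G_1$ could cycle on chain rules while $G_2$ accumulates unboundedly many tokens, and the counters would no longer fit in logarithmic space. Soundness of the multiset abstraction for $G_2$ is the other claim requiring care, but it follows directly from the positional freedom of the $0S$-selector.
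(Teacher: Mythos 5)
Your proposal is correct and follows essentially the same strategy as the paper's proof: it tracks the single $G_1$-nonterminal together with the Parikh vector of $G_2$'s sentential form in binary, justifies this abstraction by the positional freedom of the $0S$-selector $\Sigma^*\bar\Sigma\Sigma^*$, and uses the real-time restriction to bound each counter by ${\cal O}(n)$, yielding ${\cal O}(\log n)$ space and ${\cal O}(n\log n)$ time. The only cosmetic difference is that the paper keeps the $G_1$-nonterminal and the chosen rewrite in the finite control rather than on the work tape.
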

\begin{proof}
Let $G=(G_1, G_2, R)$ be a real-time ($RL$; $0S$)-system and $L(G)$ the language generated by $G$. 
Let $\cal T$ be an on-line nondeterministic TM (with stationary positions) composed of an input tape 
that stores an input $w\in (tal(G_1))^*$, $w =a_1a_2...a_n$ of length $n$, and one (read-write) working tape. 

From an initial state $q_0$, in stationary positions, $\cal T$ records (by using constant time, related 
to the length of the input) the net effects of the rules in $prod(G_2)$, and it books $l_2$ Parikh blocks 
${\cal B}_l$, $1\leq l \leq l_2$, of ${\cal O}(\log n)$ cells (to record the Parikh vector of the  s.forms of $G_2$). 
At the beginning of the computation, on the first cell of each ${\cal B}_l$, $\cal T$ records 
the Parikh vector $V_{G_2}^0$ of the axiom $B_1=ax(G_2)$, i.e., $V^0_{G_2,B_1}=1$ and $V^0_{G_2,B_l}=0$, $2\leq l \leq l_2$. 
Denote by $q_s$ the state reached at the end of this procedure. In brief, from now on $\cal T$ proceeds as follows. 
For each scanned symbol $a_i$, $\cal T$ checks whether there exists at least one rule in $rew(G)$ such that the rule 
in $G_1$ rewrites the single nonterminal existing in the s.form of $G_1$, producing $a_i$, and whether the nonterminal in the left-hand side of the rule in $G_2$ exists in the s.form of $G_2$. $\cal T$ memorizes in states the information regarding the rules applied at this step and updates the Parikh vector of the s.form of $G_2$. It accepts the input if $\cal T$ reaches the end of $w$ with no nonterminals in the s.forms of $G$. 

Formally, $\cal T$  works as follows. From $q_s$, $\cal T$ enters in a state $q_{r^{a_1,X_1}_{\hspace*{-0.1cm}A_1}}$ by reading $a_1$, where $r^{a_1,X_1}_{A_1}=(r^{a_1,X_1}_{A_1,G_1}, r_{\hspace{-0.1cm}\_,G_2})$ and $r^{a_1,X_1}_{A_1,G_1}$ is a rule of the form $A_1\rightarrow a_1X_1$. As $X_1$ may vary, there may exist several states of this  type (the nondeterminism). If there is no rule $r^{a_1,X_1}_{A_1} \in rew(G)$ then the computation is blocked, hence $w$ is not accepted. On $q_{r^{a_1,X_1}_{A_1}}$ $\cal T$ checks (in stationary positions) whether $r_{\hspace{-0.1cm}\_,G_2}$ rewrites $B_1$. If this holds\footnote{Otherwise, i.e.,  $r_{\hspace{-0.1cm}\_,G_2}$ does not rewrite $B_1$, then the computation is blocked on this state, but may continue on another state of type $q_{r^{a_1,X_1}_{A_1}}$, for a different $X_1$.}, $\cal T$ adds the net effect of  $r_{\hspace{-0.1cm}\_,G_2}$, with respect to $B_l$, to $V^0_{G_2,B_l}$, $1\leq l \leq l_2$, i.e., $\cal T$ computes, in binary, $sdf^{(1,G_2)}_{B_l}=V^0_{G_2,B_l}+df_{B_l}(r_{\hspace{-0.1cm}\_,G_2})$. The binary value of 
$sdf^{(1,G_2)}_{B_l}$ is recorded on the $l^{th}$ Parikh block of $G_2$. From state $q_{r^{a_1,X_1}_{A_1}}$, 
$\cal T$ enters in a state of the form $q_{r^{a_2,X_2}_{X_1}}$ by reading $a_2$, where $r^{a_2,X_2}_{X_1}=(r^{a_2,X_2}_{X_1,G_1}, r_{\hspace{-0.1cm}\_,G_2})$, 
$r^{a_2,X_2}_{X_1,G_1}$ is of the form $X_1\rightarrow a_2X_2$, and $r_{\hspace{-0.1cm}\_,G_2}$ is the rule in $prod(G_2)$ such that $r^{a_2,X_2}_{X_1}\in rew(G)$. 
If there is no rule in $prod(G_1)$ that rewrites $X_1$,  then the 
computation is blocked. On $q_{r^{a_2,X_2}_{X_1}}$  $\cal T$ checks whether the nonterminal rewritten  by $r_{\hspace{-0.1cm}\_,G_2}$ exists in the s.form of $G_2$, i.e., whether $sdf^{(1,G_2)}_{\alpha_{r_{\hspace{-0.1cm}\_,G_2}}}\hspace{-0.1cm}>0$, where 
$\alpha_{r_{\hspace{-0.1cm}\_,G_2}}$ is the left-hand side of  $r_{\hspace{-0.1cm}\_,G_2}$. If it holds, $\cal T$ adds 
the net effect of $r_{\hspace{-0.1cm}\_,G_2}$ with respect to $B_l$, to $sdf^{(1,G_2)}_{B_l}$,  $1\leq l\leq l_2$, 
i.e., $\cal T$ computes, in binary,  $sdf^{(2,G_2)}_{B_l}=sdf^{(1,G_2)}_{B_l}\hspace{-0.1cm}+df_{B_l}(r_{\hspace{-0.1cm}\_,G_2})$. 
The space used by $\cal T$ to record $sdf^{(1,G_2)}_{B_l}$ is now used to record  $sdf^{(2,G_2)}_{B_l}$.

The computation continues in this way until all symbols $a_i$, $1\leq i\leq n$, are read and checked as 
explained above. Besides, for the last symbol $a_n$, $\cal T$ also checks whether 
the s.form of $G_2$ contains no nonterminal, i.e.,  $sdf^{(n,G_2)}_{B_l}=sdf^{(n-1,G_2)}_{B_{l}}+df_{B_l}(r_{\hspace{-0.1cm}\_,G_2})=0$, 
$1\leq l \leq l_2$. 
\vspace{0.1cm}

The input is accepted if for each  $a_i$, $1\leq i\leq n$, 
there exists at least one state of the form $q_{r^{a_i,X_i}_{X_{i-1}}}$, where 
$r^{a_i,X_i}_{X_{i-1}}=(r^{a_i,X_i}_{X_{i-1},G_1}, r_{\hspace{-0.1cm}\_,G_2})\in rew(G)$,  
$r^{a_i,X_i}_{X_{i-1},G_1}$ is of the form $X_{i-1}\rightarrow a_iX_i$, respectively $X_{n-1}\rightarrow a_n$, $r_{\hspace{-0.1cm}\_,G_2}$ rewrites an arbitrary occurrence of $\alpha_{r_{\hspace{-0.1cm}\_,G_2}}$\footnote{Recall that $\alpha_{r_{\hspace{-0.1cm}\_,G_2}}$ is the nonterminal placed on the left-hand side of rule 
$r_{\hspace{-0.1cm}\_,G_2}$.} in the s.form of $G_2$, and at the end of computation $\cal T$ reaches a final state with no nonterminal in the s.forms of $G$. 
Recall that $w \in L(G)$ if for each snapshot there exists a strategy of marking a symbol in the s.form of $G_2$ 
such that, rewriting this symbol, at the end of computation, the s.form of $G_2$ equals $\lambda$. Since $K=\Sigma^*\bar \Sigma\Sigma^*$, it is evident that the position of the marked nonterminal (hence, the order in which the nonterminals are rewritten) in the s.form of $G_2$ is not important. There is no context that may control the marked symbol. Hence, the condition for a rule $r_{\hspace{-0.1cm}\_,G_2}$, $2\leq i \leq n$, to rewrite an arbitrary occurrence of $\alpha_{r_{\hspace{-0.1cm}\_,G_2}}$ suffices, as long as this nonterminal exists in the s.form of $G_2$ (checked by the condition $sdf^{(i-1,G_2)}_{\alpha_{r_{\hspace{-0.1cm}\_,G_2}}}\hspace{-0.1cm}>0$). 

Since $G_2$ is a CF grammar, the length of each s.form in $G_2$ is linearly bounded by the length 
of the input. Hence, ${\cal O}(\log n)$ cells are enough in order to record the number of occurrences of each nonterminal $B_l$, $1\leq l \leq l_2$, on the s.form of $G_2$, and the space used by $\cal T$ is ${\cal O}(\log n)$. 
Because each time, reading an input symbol, $\cal T$ has to visit ${\cal O}(\log n)$ cells in the working tape, and 
the constant number of auxiliary (summation) operations with binary numbers, performed at each step of derivation in $G$, require ${\cal O}(\log n)$ time, $\cal T$ performs the whole computation in ${\cal O}(n\log n)$ time
Hence ${\cal L}_{rt}$($RL$; $0S$)$\subseteq NSPACE$($\log n$).
\end{proof}

In \cite{AR} it is proved that the class of languages generated by real-time ($RL$; $0S$)-systems is equal with the class of languages generated by $\lambda$-free labeled marked Petri nets, denoted in \cite{H} as ${\cal L}$.  Hence, from Theorem 3.2 we have

\begin{corollary}\rm 
${\cal L}={\cal L}_{rt}$($RL$; $0S$) ${\subseteq}$ $NSPACE(\log n)$.
\end{corollary}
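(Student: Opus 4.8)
The plan is to derive the corollary by composing two facts: the structural identity ${\cal L} = {\cal L}_{rt}$($RL$; $0S$) established in \cite{AR}, and the space bound ${\cal L}_{rt}$($RL$; $0S$) $\subseteq NSPACE(\log n)$ just obtained in Theorem 3.2. With both available, the equality and the inclusion chain together give the statement; the only thing I would actually need to check is that the identity of \cite{AR} is invoked at a level of encoding under which the logarithmic-space bound transfers without blow-up.

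First I would recall the correspondence that underlies \cite{AR}, since it is what makes the two classes the same objects rather than merely equipotent. Given a $\lambda$-free lmPN ${\cal P}=(P,T,F,\Sigma,l,M_0)$, I would identify the places $P$ with the nonterminals $ntal(G_2)=\{B_1,\dots,B_{l_2}\}$ of the $0S$-grammar, so that a marking is exactly the Parikh vector of a sentential form of $G_2$ and $M_0$ corresponds to the axiom $B_1$. Each transition $t\in T$ is realized by a $G_2$-rule $r_{i,G_2}$ whose net-effect vector $V(r_{i,G_2})$ equals the token change of $t$; the $0$-sequential selector $K=\Sigma^*\bar\Sigma\Sigma^*$ allows any occurrence of the consumed nonterminal to be rewritten, which matches the fact that a transition fires on tokens irrespective of their position. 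The label $l(t)\in\Sigma$ is emitted by the $G_1$-rule paired with $r_{i,G_2}$ in $rew(G)$, and $\lambda$-freeness of ${\cal P}$ corresponds precisely to $G_1$ being real-time, i.e. every step emits exactly one terminal. Under this dictionary a labeled firing sequence of ${\cal P}$ from $M_0$ is exactly a successful derivation of $G$, so the generated languages coincide; I would rely on \cite{AR} for the bookkeeping that reconciles the two acceptance modes.

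I would then conclude immediately: by \cite{AR} we have ${\cal L}={\cal L}_{rt}$($RL$; $0S$), and by Theorem 3.2 every $L\in{\cal L}_{rt}$($RL$; $0S$) is accepted by an on-line nondeterministic Turing machine in ${\cal O}(\log n)$ space, whence ${\cal L}={\cal L}_{rt}$($RL$; $0S$) $\subseteq NSPACE(\log n)$.

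The point that needs care, and what I expect to be the main obstacle, is that the inclusion concerns the binary-encoding membership problem, so the translation from ${\cal P}$ to $G$ must be size-preserving. The machine of Theorem 3.2 stores the marking in binary across $l_2$ blocks of ${\cal O}(\log n)$ cells, which is exactly what keeps the marking logarithmic even as token counts grow to ${\cal O}(n)$; I would therefore verify that the parameters $l_2$, $|rew(G)|$ and the maximal right-hand side length in $G_2$ are bounded by the description of ${\cal P}$ and are independent of the input, so that the ${\cal O}(\log n)$ bound is genuinely in the input length $n$ and not hidden in a unary measure of the marking. Granting this, the advertised gap between ${\cal L}$ and the non-elementary class ${\cal L}^{\lambda}_0$ follows.
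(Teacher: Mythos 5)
Your proposal is correct and follows exactly the paper's own route: the corollary is obtained by combining the equality ${\cal L}={\cal L}_{rt}$($RL$; $0S$) from \cite{AR} with the inclusion ${\cal L}_{rt}$($RL$; $0S$) $\subseteq NSPACE(\log n)$ of Theorem 3.2. The additional sketch of the Petri-net-to-system correspondence and the remark on encoding are sensible elaborations of what the paper simply cites, but they do not change the argument.
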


\section{On the Complexity of ($RL$; $RB$)-Systems and One Counter Machines}

($RL$; $RB$)-systems are alternative models of {\it pushdown automata} \cite{EHR1}, \cite{EHR2}, \cite{EHR3}. Pushdown automata accept exactly the CF languages. If the pushdown alphabet is restricted to only two symbols, a symbol $S_2$ to mark the 
bottom of the stack and a working symbol $Z_2$, then the resulting machine is a {\it one counter automaton} (OCA). 
A counter can be increased by 1, decreased by 1, left unchanged, or tested for $0$  (the bottom marker test). The class of languages accepted by OCAs, denoted by ${\cal L}(OC)$, is a proper subset of CF languages \cite{Ha}. If OCAs are not allowed to perform $0$-tests then the resulting machines are called {\it one counter nets automata} (OCN). The class of languages accepted by OCN  are denoted by ${\cal L}(OCN)$. It is known that ${\cal L}(OCN)= {\cal L}(1$-$VASS)\subset {\cal L}(OC)$. We prove that the membership problem for some particular subclasses of ${\cal L}(OC)$ and ${\cal L}(OCN)$ are in ${\cal N}{\cal C}^1$.  

\begin{definition}\rm
(1) A cts-system $G=(G_1, G_2,R)$ is an ($RL$; $RB$)-counter system, denoted by ($RL$; $RB_c$), if $G_1$ is an $RL$-grammar, $G_2$ is an 
$RB$-grammar, $al(G_2)=ntal(G_2)=\{S_2,Z_2\}$, and $prod(G_2)\subseteq \{S_2\rightarrow S_2, S_2\rightarrow S_2Z_2, 
Z_2\rightarrow Z_2, Z_2\rightarrow Z_2Z_2, Z_2\rightarrow \lambda, S_2\rightarrow \lambda\}$. 

(2) A cts-system $G=(G_1, G_2,R)$ is an ($RL$; $RB$)-counter nets system, denoted by ($RL^0$; $RB_c$), if $G_1$ is an $RL$-grammar, $G_2$ is an 
$RB$-grammar, $al(G_2)=ntal(G_2)=\{Z_2\}$, and $prod(G_2)\subseteq \{Z_2\rightarrow Z_2, Z_2\rightarrow Z_2Z_2, Z_2\rightarrow \lambda\}$. 
\end{definition}

 From  \cite{R2} we have ${\cal L}$($RL$;$RB_c$) = ${\cal L}(OC)$ and ${\cal L}$($RL^0$;$RB_c$) = ${\cal L}(OCN)$.



\begin{definition}\rm  
Let $G=(G_1, G_2, R)$ be an ($RL$; $RB_c$)-system. The {\it state diagram} ${\cal D}_G=({\cal V}, {\cal E})$ of $G$, 
is a directed graph in which ${\cal V}=\{(X,Y)| X\hspace{-0.1cm}\in ntal(G_1),  
 Y\hspace{-0.1cm}\in ntal(G_2)\}\cup \{F\}$ is the set of nodes. The nodes labeled by $(S_1,S_2)$ and by $F$ are the initial node, and respectively, the final node of ${\cal D}_G$. Let $r=(r_{G_1}, r_{G_2})$ be an arbitrary rule in $R$, the set ${\cal E}$ of edges is built as follows. 

- If $r_{G_1}$ is of the form $X\rightarrow xY$, $x\in tal(G_1) \cup \{\lambda\}$, $X,Y\in ntal(G_1)$,  
and $r_{G_2}$ is of the form $S\rightarrow S$, $S\in \{S_2, Z_2\}$, then $\cal E$ contains an edge, 
labeled by $x/0$, from the node labeled by $(X,S)$ to the node labeled by $(Y,S)$. 

- If $r_{G_1}$ is of the form $X\rightarrow xY$, $x\in tal(G_1) \cup \{\lambda\}$, $X,Y\in ntal(G_1)$ 
and $r_{G_2}$ is of the form $S\rightarrow SZ_2$, $S\in \{S_2, Z_2\}$, then $\cal E$ contains an edge, 
labeled by $x/+$, from the node labeled by $(X,S)$ to the node labeled by $(Y,Z_2)$. 

- If $r_{G_1}$ is of the form $X\rightarrow xY$, $x\in tal(G_1) \cup \{\lambda\}$, $X,Y\in ntal(G_1)$ 
and $r_{G_2}$ is of the form $Z_2\rightarrow \lambda$, then $\cal E$ contains two edges, labeled by $x/-$,   
one from the node labeled by $(X,Z_2)$ to the node labeled by $(Y,Z_2)$, and the other one from $(X,Z_2)$
to the node labeled by $(Y, S_2)$. 

- If $r_{G_1}$ is of the form $X\rightarrow x$, $x\in tal(G_1) \cup \{\lambda\}$, $X\in ntal(G_1)$ and 
$r_{G_2}$ is of the form $S_2\rightarrow \lambda$, then $\cal E$ contains an edge, labeled by $x$, from 
the node labeled by $(X,S_2)$ to a final node labeled by $F$. 
\end{definition}

Next we prove that walking through the state diagram ${\cal D}_G$, from the initial node to the final node, 
with an additional counting procedure for the number of $Z_2$'s existing in the s.form of $G_2$, we can 
simulate the work of $G$ by using (nondeterministic) logarithmic space. More precisely, we have

\begin{theorem}\rm
Each $L\in{\cal L}$($RL$; $RB_c$) can be recognized by a nondeterministic TM in  
${\cal O}(\log n)$ space and ${\cal O}(n^k\log n)$ time (${\cal L}$($RL$; $RB_c$) $\subseteq NSPACE$($\log n$)).  
\end{theorem}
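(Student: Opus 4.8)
The plan is to recognize $L(G)$ by a nondeterministic TM $\mathcal{T}$ that walks through the state diagram $\mathcal{D}_G=(\mathcal{V},\mathcal{E})$ of Definition 4.3, from the initial node $(S_1,S_2)$ to the final node $F$, while keeping in binary on its work tape the number $m$ of occurrences of $Z_2$ in the current s.form of $G_2$. The starting observation is that, since $G_2$ is an $RB_c$-grammar with $al(G_2)=\{S_2,Z_2\}$ and the right-boundary selector $K=\Sigma^*\bar\Sigma$ rewrites only the rightmost symbol, every s.form of $G_2$ has the shape $S_2Z_2^{m}$, $m\ge 0$. Hence the whole s.form is faithfully encoded by the single integer $m$ together with the flag ``$m=0$''; this flag is exactly the second coordinate $Y\in\{S_2,Z_2\}$ of the nodes $(X,Y)$ of $\mathcal{D}_G$. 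A successful derivation of $G$ then corresponds to a walk in $\mathcal{D}_G$ from $(S_1,S_2)$ to $F$ whose sequence of non-$\lambda$ edge labels spells the input $w$, whose counter annotations $+/-/0$ keep $m\ge 0$ throughout and return it to $0$ at the end, and which respects the invariant that the second coordinate of the current node is $Z_2$ iff $m>0$.

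First I would describe $\mathcal{T}$ precisely. It stores the current node of $\mathcal{D}_G$ in its finite control (constant space) and $m$ in binary on the work tape. At each step it nondeterministically selects an outgoing edge: on a terminal edge labeled $x/op$ with $x\in tal(G_1)$ it advances the input head only if $x$ matches the next input symbol, while on a $\lambda$-edge it leaves the head in place; it then updates $m$ according to $op$ ($+$: $m:=m+1$; $-$: $m:=m-1$; $0$: unchanged). The only subtlety is the pair of $-$ edges produced by a rule $Z_2\to\lambda$ in Definition 4.3: $\mathcal{T}$ takes the edge into a $Z_2$-node when the decremented $m$ is still positive and the edge into an $S_2$-node when it reaches $0$, rejecting the branch on a mismatch; the remaining edges are likewise guarded by the checks ``$m>0$'' at $Z_2$-nodes and ``$m=0$'' at $S_2$-nodes. $\mathcal{T}$ accepts when it reaches $F$ with the input exhausted. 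These guards put the guard-consistent walks of $\mathcal{D}_G$ in exact correspondence with the derivations of $G$, which yields correctness.

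For the resource bounds I would argue that it suffices to consider \emph{shortest} accepting derivations, and that these are polynomially long with a polynomially bounded counter. A configuration is a triple $(X,m,i)$ with $X\in ntal(G_1)$, $m$ the counter value, and $i\in\{0,\dots,n\}$ the input position; in a shortest accepting walk no configuration repeats. Exactly $n$ edges are terminal (each $X\to xY$ or $X\to x$ with $x\neq\lambda$ contributes one letter of $w$), so the walk splits into at most $n+1$ maximal blocks of $\lambda$-edges. The key estimate is a pumping argument on the counter profile: if $m$ ever exceeds $l_1(n+1)$, then along the ascending phase two increment steps share the same pair $(X,i)$, producing a $\lambda$-only positive loop; excising it lowers all later counter values, so it must be paired with a matching loop excised in the descending phase to keep the counter nonnegative and $w$ unchanged. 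Iterating, a shortest accepting walk can be taken with $m=\mathcal{O}(n)$, whence $m$ fits in $\mathcal{O}(\log n)$ cells and the number of configurations---hence the walk length---is $\mathcal{O}(n^{k})$ (one may take $k=2$).

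The space used by $\mathcal{T}$ is therefore $\mathcal{O}(\log n)$, and since each step performs a constant number of binary $\pm 1$ updates and comparisons on an $\mathcal{O}(\log n)$-bit counter, the total running time is $\mathcal{O}(n^{k}\log n)$. I expect the main obstacle to be precisely the counter bound: establishing rigorously that the counter-positive $\lambda$-loop excised in the ascending phase can always be paired with a compensating excision in the descending phase, so that the counter stays nonnegative and the consumed word is preserved. Once this polynomial bound is secured the rest is routine, and we conclude $\mathcal{L}(RL;RB_c)\subseteq NSPACE(\log n)$.
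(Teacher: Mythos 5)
Your construction is essentially the paper's: both simulate $G$ by a nondeterministic walk on the state diagram ${\cal D}_G$, keeping the current node in the finite control and the number of $Z_2$'s in binary on an ${\cal O}(\log n)$ work tape, with the $S_2$/$Z_2$ coordinate of the node serving as the zero-test and the accepting condition being arrival at $F$ with the input exhausted and the counter empty. Your explicit observation that every s.form of $G_2$ has the shape $S_2Z_2^m$ (so node plus counter faithfully encodes the configuration) is a clean justification the paper leaves implicit. Where you genuinely diverge is in the time bound: the paper argues by a case analysis on chain rules of $G_1$ --- loops containing a non-chain rule consume input, so the counter is linearly bounded, while chain-rule-only loops are declared unproductive and forced to terminate by the ${\cal O}(\log n)$ space cap itself ``without altering the language'' --- whereas you argue via shortest accepting walks over configurations $(X,m,i)$ and a pumping/excision argument pairing a positive $\lambda$-loop with a compensating negative one. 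Your route is the more standard and potentially more rigorous one, but the pairing step you flag as the main obstacle is exactly the part you have not established (counter profiles need not split into one ascent and one descent, so the excision must be localized carefully); the paper's route avoids this by leaning on the space restriction as a semantic cutoff, which is itself stated without proof. Neither argument is fully airtight on this point, but both yield the claimed ${\cal O}(\log n)$ space and ${\cal O}(n^k\log n)$ time, and your version would be the better one to complete.
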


\begin{proof} 
Let $G=(G_1, G_2, R)$ be an ($RL$; $RB_c$)-system with $|ntal(G_1)|\geq 2$, and ${\cal D}_G$ its state 
diagram. Let $\cal T$ be an on-line nondeterministic TM (with stationary positions) composed of an input 
tape that stores an input $w\in (tal(G_1))^*$, $w =x_1x_2...x_n$, of length $n$, and one read-write  
${\cal O}(\log n)$-space restricted working tape used to store in binary the number $n_z$  of $Z_2$'s occurring 
in the s.form of $G_2$ at any snapshot. At the beginning of the computation $n_z$ is set to $0$. Each node 
$(X,Y)\in {\cal V}$ corresponds to a state\footnote{The structure of ${\cal D}_G$ 
is depicted by the manner in which the transition relation $\delta$ of $\cal T$ is defined. Hence, no other 
recordings on the working tape are needed.} of ${\cal T}$, denoted by $q_{(X,Y)}$. Besides, ${\cal T}$ has 
a particular state $q_z$ in which it enters each time ${\cal T}$ reaches a state $q_{(X, S_2)}$ with a 
non-empty counter ($n_z>0$). In $q_z$ the $\delta$ (transition) relation is left undefined.   

Simulating $G$, $\cal T$ starts the computation in $q_{(S_1, S_2)}$ and nondeterministically applies one of the 
rules that rewrites $(S_1, S_2)$, i.e., from $q_{(S_1, S_2)}$ $\cal T$ enters in one of the states $q_{(X, Y)}$ such 
that ${\cal E}$  contains the edge $(S_1, S_2)-(X, Y)$. When passing from $q_{(S_1, S_2)}$ to $q_{(X, Y)}$, 
$\cal T$ reads $x$ from the input tape or reads no symbol if $x=\lambda$, where $x$ is the label of the edge 
$(S_1, S_2)-(X, Y)$. Besides, if $(S_1, S_2)-(X, Y)$ is labeled by  $+$/$-$/$0$ then $n_Z$ is increased by 
$1$/decreased by $1$/left unchanged, respectively. From state $q_{(X, Y)}$, $\cal T$ enters in the state 
$q_{(X', Y')}$ such that $(X, Y)-(X', Y')\in {\cal E}$, it reads $x$ from the input tape, where $x$ is the label 
of the edge $(X, Y)-(X', Y')$, and increases by $1$/decreases by $1$/leaves unchanged the binary value of $n_z$ if 
$(X, Y)-(X', Y')$ is labeled respectively by  $+$/$-$/$0$. The computation continues in this way until all symbols from 
the input tape are read. The input $w$ is accepted if at the end of computation ${\cal T}$ enters in the final state $q_F$ 
with no $Z_2$'s on the s.form of $G_2$ (i.e., $n_z=0$) and the input head placed on a rightmost delimiter of the input (e.g. the blank symbol).

Note that walking on ${\cal D}_G$ one can reach again the node $(S_1,S_2)$. If at this point $n_z>0$ then 
the computation is blocked due to the existence of state $q_z$. If $n_z=0$ (i.e., the bottom of the stack is 
reached) and there exists a rule of the form $S_1\rightarrow x$, $x\in tal(G_1) \cup \{\lambda\}$, then from $q_{(S_1, S_2)}$, $\cal T$ enters 
in the state $q_F$, and it accepts the input only by reading the blank symbol, otherwise the computation is blocked. If 
$S_1\rightarrow x$ does not exist, $\cal T$ may continue with a rule of the form $S_1\rightarrow xX$, by recharging the counter. 
Furthermore, the simulation allows recharging of the counter each time $\cal T$ gets into a state of the form 
$q_{(X, S_2)}$, with $n_z=0$, and there exists a rule of the form  $X\rightarrow xY$ and unread symbols on the 
input tape. If $\cal T$ reaches $q_{(X, S_2)}$ with a nonempty counter ($n_z>0$) then again the computation is 
blocked due to state $q_z$. 

The time used by $\cal T$ to perform the computation is ${\cal O}(n^k\log n)$, where $k\geq 1$ is a constant. This is because when $G_1$ 
contains no chain rules, $n_z$ (recorded in binary) is linearly bounded by the length of the input (as $G_2$ is a CF grammar). The time complexity may increase for the case when the ($RL$; $RB_c$)-system (through $G_1$) contains chain rules. In this situation, if all chain rules are invoked in loops composed of at least one non-chain rule, then at each run of such a loop at least one symbol is read. Hence, the length of computation depends linearly on the length of the input. Certainly, the same situation occurs when chain rules are sparsely applied during a derivation (in the sense that none of them is used during the execution of a loop). 
If all chain rules are invoked in loops composed only of chain rules, then the running of such a loop is overflowing 
for the counter and useless for the computation. If the $Z_2$'s symbols, added through such loops, are deleted by non-chain rules, i.e., meantime the system scans some input symbols, then the running time depends linearly on the length of the input. Otherwise, if the $Z_2$'s symbols are deleted only by chain rules, then this loops are useless for computation and they can be forced to run of a polynomial number 
of times by imposing ${\cal O}(\log n)$ restriction on the working tape (without altering the language). Hence, the number of $Z_2$'s gathered on the s.form of $G_2$, can be at most polynomial. The time complexity is  ${\cal O}(n^k\log n)$, $k\geq 1$, while the space complexity is ${\cal O}(\log n)$. 
\end{proof} 

It is widely known that ${\cal L}(OC)$ is $NSPACE(\log n)$-complete \cite{Sud}. Therefore, the 
inclusion ${\cal L}(OC)\subseteq SPACE(\log n)$ tracks the equality $SPACE(\log n)$ $=NSPACE(\log n)$.   
Both these statements are still open problems in complexity theory, and therefore 
a $SPACE(\log n)$ upper bound for the whole class of ${\cal L}(OC)$ remains a challenging problem.  
Next we prove that an ${\cal N}{\cal C}^1$, hence a $SPACE(\log n)$, upper bound can be obtained 
for languages generated by ($RL$; $RB_c$)-systems for which $ntal(G_1)$ is restricted to only one 
nonterminal. Denote by ($RL_1$; $RB_c$) these particular ($RL$; $RB_c$)-systems 
and by ${\cal L}(OC_1)$ the class of one counter languages generated by OCA's with 
only one control state. We have ${\cal L}$($RL_1$; $RB_c$) = ${\cal L}(OC_1)$.  

An {\it indexing} ATM is an ATM composed of an input tape, $k$ working tapes ($k\geq 1$), and an index tape. The indexing ATM 
has $k+1$ heads corresponding to the working tapes and the index tape. There is no need of a head for the input tape, since the 
indexing ATM reads a symbol from the input tape, through the index tape as follows. It writes an integer $i$, $i\leq n$, in binary, on 
the index tape, where $n$ is the length of the input, and enters in a state requiring the symbol placed at the $i^{th}$ location 
on the input tape to be visible to the machine. From the initial state or a given node in the computation tree of the ATM, an indexing ATM may 
proceed either with an {\it universal configuration} (of which state is an universal state) or with an {\it existential configuration} (of which 
state is an existential state). Since an ATM is a non-deterministic machine, there will be one or more such universal (or existential) configurations 
placed at the same level in the computation tree. All edges going from a certain node to universal (existential) configurations, placed at the 
same level, are called universal (existential) {\it branches}. Sometimes, when in an universal (existential) 
branch the ATM makes some more complex operations (such as simulating another ATM), we call the corresponding branch 
universal (existential)  {\it process}.    

The \textit{Nick class} (${\cal N}{\cal C}$) is defined as the class of all functions computable by a family of \textit{uniform Boolean 
circuits} with polynomial size and depth bounded by a polynomial in log $n$. For each integer $i$, we denote by ${\cal N}{\cal C}^i$ 
the class of functions computable by polynomial size Boolean circuits with depth $\mathcal O$(log$^i$ $n$) and fan-in two. We 
have ${\cal N}{\cal C}^i\subseteq {\cal N}{\cal C}^{i+1}$, $i\geq 1$, and ${\cal N}{\cal C}=\cup_{i\geq 1}{\cal N}{\cal C}^i$. 
Depending on the type of the uniformity restrictions, i.e., speedups on time and space imposed on (alternating) TMs  
simulating a family of circuits, we obtain several ${\cal N}{\cal C}$ ``uniform'' classes. The uniformity condition on which we 
are interested in this paper is the $U_{E^*}$-uniformity. It concerns logarithmic restrictions on the time and space needed by 
an ATM to simulate a family of circuits of size $s(n)$ and depth $t(n)$ \cite{R}. 
Depending on the type of the uniformity, several characterizations of the ${\cal N}{\cal C}$ class in terms of ATM resources are 
presented in \cite{R}. For $i\geq 2$, all ${\cal N}{\cal C}^i$ classes behave similarly, no matter which uniformity
restriction is imposed on circuits, i.e., ${\cal N}{\cal C}^i =ASPACE, TIME(\log n, \log^i n)$,
where $ASPACE, TIME(s(n), t(n))$ denotes the class of languages acceptable by ATMs in simultaneous space $s(n)$ and time $t(n)$. For $i=1$ this equality 
holds only for the $U_{E^*}$-uniformity, more precisely, $U_{E^*}$-uniform ${\cal N}{\cal C}^1 = ASPACE, TIME(\log n, \log n)= ALOGTIME$. Furthermore, 
${\cal N}{\cal C}^1\subseteq SPACE(\log n) \subseteq NSPACE(\log n)$ \cite{BDG}. 
Recall that an ($RL_1$; $RB_c$)-counter system performs a {\it zero-test} ($0$-test) each time the 
$RB_c$-grammar rewrites $S_2$ after some deletions of $Z_2$'s. After a $0$-test the counter can be 
recharged. We have 

\vspace*{0.1cm}
\begin{theorem}\rm
Each language $L\in{\cal L}$($RL_1$; $RB_c$) can be recognized by an indexing ATM in ${\cal O}(\log n)$ 
time and space (i.e., ${\cal L}$($RL_1$; $RB_c$) $\subseteq \hspace*{0.1cm}ALOGTIME$). 
\end{theorem}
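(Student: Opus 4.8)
The plan is to exploit the fact, recorded just before the statement, that $(RL_1;RB_c)$-systems coincide with one counter machines having a single control state. Since $ntal(G_1)=\{S_1\}$ and $ntal(G_2)=\{S_2,Z_2\}$, the state diagram ${\cal D}_G$ of Definition 4.2 collapses to only three nodes, $(S_1,S_2)$, $(S_1,Z_2)$ and $F$, so the \emph{only} unbounded information carried by a snapshot is the number $n_z$ of occurrences of $Z_2$ in the s.form of $G_2$, i.e. the counter. A successful derivation then corresponds to a sequence of rewrites that reads $w=x_1\cdots x_n$ while the $G_2$-component performs a counter walk $c_0=0,c_1,\dots,c_n=0$ with $c_i=c_{i-1}+\delta_i$, $\delta_i\in\{-1,0,+1\}$, subject to $c_i\ge 0$ for all $i$ and to the zero-test discipline: a step governed by an $S_2$-rule may be taken only when the counter is $0$, and one governed by a $Z_2$-rule only when it is positive. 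First I would record that, by the counter bound established in the proof of Theorem 4.3, $n_z$ stays polynomially bounded (indeed $c_i\le n$ in the real-time case), so every counter value occupies ${\cal O}(\log n)$ bits.

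Next I would build an indexing ATM ${\cal A}$ that decides, by an alternating divide-and-conquer over input positions, the predicate $\mathrm{Path}(i,j,c,c')$: ``there is a valid computation segment reading $x_{i+1}\cdots x_j$ that starts with counter $c$, ends with counter $c'$, keeps the counter nonnegative and respects the zero-test discipline throughout''. At a base case $j=i+1$, ${\cal A}$ writes $i+1$ on the index tape to make $x_{i+1}$ visible, existentially selects (among the finitely many rewrites of $G$) one whose $G_1$-rule emits $x_{i+1}$, and checks that its $G_2$-rule is applicable at counter $c$ and yields $c'$; this is an ${\cal O}(1)$-way choice followed by an ${\cal O}(\log n)$-bit comparison. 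In the recursive case ${\cal A}$ takes the midpoint $m=\lfloor(i+j)/2\rfloor$, \emph{existentially} guesses the intermediate counter value $c_m$, and \emph{universally} branches to verify $\mathrm{Path}(i,m,c,c_m)$ and $\mathrm{Path}(m,j,c_m,c')$. Acceptance is $\mathrm{Path}(0,n,0,0)$ together with a terminating rewrite of the form $(S_1\to x,\,S_2\to\lambda)$ at counter $0$, i.e. the edge into $F$. Because the machine carries \emph{absolute} counter values rather than relative increments, the zero-tests are verified automatically at the step where they occur, and soundness and completeness of the split are immediate: any valid segment on $[i,j]$ passes through some $c_m$ at position $m$, and two valid halves sharing the junction value $c_m$ concatenate to a valid whole.

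The recursion has depth ${\cal O}(\log n)$ and the work tape ever holds only a constant number of ${\cal O}(\log n)$-bit quantities, so the space is ${\cal O}(\log n)$. The delicate point, and the step I expect to be the main obstacle, is the \emph{time}: naively guessing and threading a fresh ${\cal O}(\log n)$-bit counter value at each of the ${\cal O}(\log n)$ levels costs ${\cal O}(\log^2 n)$ along a root-to-leaf path, which would only place $L$ in $ASPACE,TIME(\log n,\log^2 n)={\cal N}{\cal C}^2$. To reach the required ${\cal O}(\log n)$ time, hence $ALOGTIME=U_{E^*}$-uniform ${\cal N}{\cal C}^1$, I would avoid recomputing full heights inside the recursion and instead delegate each needed height $c_i=\sum_{t\le i}\delta_t$ to a parallel iterated-addition subroutine; summing $n$ values from $\{-1,0,+1\}$, together with the prefix-nonnegativity and ``height $=0$'' tests, is precisely the bracket-matching/Dyck problem known to lie in ${\cal N}{\cal C}^1$, so local consistency and the zero-test conditions can be checked by a single universal sweep over positions rather than by threading values through every level. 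The single-state hypothesis is exactly what makes this possible: with trivial finite control a segment is summarized by height data alone, and the counter path consistent with $w$ can be taken canonical, removing the need to carry global choices down the recursion. A final routine check must confirm, as for Theorem 4.3, that $\lambda$- (chain) rules contribute only ${\cal O}(n^k)$ silent steps, folded into the segment predicate, so that the position-indexed divide-and-conquer remains well founded.
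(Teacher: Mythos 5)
Your approach is genuinely different from the paper's: you propose a generic divide-and-conquer on a reachability predicate $\mathrm{Path}(i,j,c,c')$ over input positions, whereas the paper proves the theorem by an exhaustive case analysis over the finitely many rule subsets $\wp_1,\dots,\wp_{11}$ of the rule inventory $\psi^x_1,\dots,\psi_{10}$, designing for each case an ad hoc combination of universal/existential branches, counting, and binary comparisons. Unfortunately, your argument has a genuine gap exactly at the point you flag as delicate. As you note, threading a guessed ${\cal O}(\log n)$-bit midpoint counter value through ${\cal O}(\log n)$ recursion levels only yields $ASPACE,TIME(\log n,\log^2 n)={\cal N}{\cal C}^2$. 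Your proposed repair --- that ``the counter path consistent with $w$ can be taken canonical'' and that the verification reduces to iterated addition / bracket matching --- is asserted, not proved, and it is false as stated. The counter increments are \emph{not} determined by the input: the same terminal $x$ may be emitted by rules with different counter effects (e.g.\ simultaneously by $\psi^x_3$, $\psi^x_4$ and $\psi^x_5$), and the chain rules $\psi_7,\psi_8,\psi_9$ change the counter without consuming any input symbol. So the question is not whether a fixed $\{-1,0,+1\}$-sequence is a Dyck walk, but whether \emph{some} choice among exponentially many admissible increment sequences (interleaved with silent moves) is valid --- and resolving that nondeterminism within ${\cal O}(\log n)$ alternating time is precisely the content of the paper's case analysis ($\wp_8$ II, $\wp_{10}$, $\wp_{11}$ in particular). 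The single-state hypothesis does not by itself make the path canonical.

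Two further points would need repair even if the canonicity claim were secured. First, the zero-test discipline is stronger than ``prefix sums nonnegative and final sum zero'': rules rewriting $S_2$ (types $\psi^x_1,\psi^x_2,\psi_7,\psi^x_6,\psi_{10}$) may fire only when the counter is \emph{exactly} zero, and after such a test the counter may be recharged; these mid-run equality constraints, whose positions are themselves part of the nondeterministic choice, are not expressible as a Dyck condition and are handled in the paper by explicitly counting how many $0$-tests each segment must perform. Second, your base case $\mathrm{Path}(i,i+1,c,c')$ is not an ${\cal O}(1)$-way choice followed by one comparison: between two consecutive input symbols the system may apply chain rules polynomially many times, so $c'$ can differ from $c$ by a polynomially large, nondeterministically chosen amount, and deciding which pairs $(c,c')$ are achievable at a given position is itself a nontrivial subproblem (the paper uses exactly this freedom, e.g.\ $\psi_7$ to recharge after a $0$-test and $\psi_9$ to delete surplus $Z_2$'s). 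As written, the proposal establishes at best an ${\cal N}{\cal C}^2$ upper bound for a restricted subcase, not the claimed $ALOGTIME$ bound.
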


\begin{proof}
Let $G=(G_1, G_2, R)$ be an ($RL_1$;$RB_c$)-system, $L(G)$ the language generated by $G$, $tal(G_1)=\{a_1,..., a_m\}$, $ntal(G_1)=\{S_1\}$, and $ntal(G_2)=\{S_2, Z_2\}$. $R=rew(G)$ can be composed of the following rules (where $x\in tal(G_1)$) 
\vspace{0.1cm}\\ 

\hspace{1cm}${\cal S}=\left\{\begin{array}{llllll} \psi^{x}_1=(S_1\rightarrow xS_1, S_2\rightarrow S_2), \psi^{x}_2=(S_1\rightarrow xS_1, S_2\rightarrow S_2Z_2),\\
\psi^{x}_3=(S_1\rightarrow xS_1, Z_2\rightarrow Z_2Z_2), \psi^{x}_4=(S_1\rightarrow xS_1, Z_2\rightarrow Z_2), \\
\psi^{x}_5=(S_1\rightarrow xS_1, Z_2\rightarrow \lambda), \psi^{x}_6=(S_1\rightarrow x, S_2\rightarrow \lambda),\\
\psi_7=(S_1\rightarrow S_1, S_2\rightarrow S_2Z_2), \psi_8=(S_1\rightarrow S_1, Z_2\rightarrow Z_2Z_2),\\ 
\psi_{9}=(S_1\rightarrow S_1, Z_2\rightarrow \lambda),\psi_{10}=(S_1\rightarrow \lambda, S_2\rightarrow \lambda).
\end{array}\right.$
\vspace{0.1cm}\\ 

Let $\cal A$ be an ATM composed of an index tape, a (read-write) working tape divided into three blocks 
${\cal B}_1$, ${\cal B}_2$, and ${\cal B}_3$, and an input tape that stores an input word $w\in (tal(G_1))^*$, $w=x_1...x_n$, of length $n$. The first block ${\cal B}_1$, of constant size (related to the length of the input) 
is used to store the data concerning the types of the rules in $R$, the second block ${\cal B}_2$, of 
${\cal O}(\log n)$ size, is used to record in binary the number $n_z$ of $Z_2$'s existing on the s.form of 
$G_2$ at any snapshot, while the last block ${\cal B}_3$ is used for auxiliary computations. 

In brief, we eliminate in turn some of the rules in $\cal S$ and we study whether the rest of the rules form a 
valid ($RL_1$; $RB_c$)-system. For a valid ($RL_1$; $RB_c$)-system we prove that $w\in (tal(G_1))^*$ can 
be accepted by $\cal A$ in logarithmic time and space. We consider only those ($RL_1$; $RB_c$)-systems that 
may rise interesting computations, depending on the types of the rules that compose $R=rew(G)$. 

Any successful derivation in a ``general'' ($RL_1$; $RB_c$)-system runs upon two main ``subroutines''. 
Rules applied on the first subroutine cannot be applied on the second one. On the first subroutine the 
system applies, arbitrarily many times, a rule of type $\psi^{x}_1$. Applying one time a rule of type  
$\psi^{x}_6$ or $\psi_{10}$ the system ends up with a successful derivation, or it may enter in the second 
subroutine by applying (one time) a rule of type  $\psi^{x}_2$ or  $\psi_7$. Now it can apply, arbitrarily many 
times, rules of the forms $\psi^{x}_3$, $\psi^{x}_4$, $\psi^{x}_5$, $\psi_8$, $\psi_9$. In a successful derivation 
the second subroutine must be canceled  by the application of a rule of type  $\psi^{x}_5$ or  $\psi_9$, to delete 
the last $Z_2$ symbol existing on the s.form of $G_2$ (the counter stack) performing a $0$-test. Then it may finish 
the computation (by using a rule of type $\psi^{x}_6$ or  $\psi_{10}$)  or it may recharge the counter, by passing 
either into the first subroutine or into the second one. 
In the sequel, rules of types $\psi_8$ and $\psi_9$ are called correction rules, because they do not produce terminals, 
but they add more $Z_2$'s on the s.form of $G_2$ (to allow applications of non-chain rules in $G_1$ associated with 
deleting rules in $G_2$) or they delete an exceeding number of $Z_2$'s. 

In stationary positions $\cal A$ records on ${\cal B}_1$ the rules in $R$, while $n_z$ is set to $0$ and recorded on ${\cal B}_2$. In an existential 
state $\cal A$ guesses the length of $w$, i.e., writes on the index tape $n$ in binary, and checks whether the $n^{th}$ cell of the input tape 
contains a symbol in $tal(G_1)$ and the cell $n+1$ contains no symbol. 

Denote by $\wp_i$ the ($RL_1$; $RB_c$)-system composed of some rules described in ${\cal S}$. There are finitely many such systems and most of them are invalid (do not generate a language). Next we discus the solution for the most important ones.   


- $\wp_1$: ($\psi^x_1, \psi^x_6$) is an ($RL_1$; $RB_c$)-system composed of rules of types $\psi^x_1$ and $\psi^x_6$. $\cal A$ checks whether $R$ contains a rule of type $\psi^{x}_6$ that produces $x_n=x$, and whether for each $x_i$ occurring in $x_1x_2...x_{n-1}$, $R$ contains the rule $\psi^{x_i}_1$ that produces $x_i$. These can be done by universally branching all $x_i$, $1\leq i \leq n$. 

- $\wp_2$: ($\psi^x_1, \psi_{10}$) $\cal A$ universally checks\footnote{For a system  ($\psi^x_1, \psi^x_6, \psi_{10}$) the procedure works similar as described in $\wp_1$ and $\wp_2$.} whether for each terminal $x_i$ occurring in $x_1x_2...x_n$ there exists in $R$  a rule of the form $\psi^{x_i}_1$, $1\leq i \leq n$. 

- $\wp_3$: ($\psi_7, \psi^x_3, \psi^x_5, \psi_{10}$)
\textit{The case when each $x$ occurring in $w$ is produced in the same time by $\psi^x_3$ and $\psi^x_5$.} $\cal A$ checks whether \hspace{0.1cm}$I.$  
Each $x_i$ in $w$ is produced (simultaneously) by rules of types $\psi^{x_i}_3$ and $\psi^{x_i}_5$, $1\leq i \leq n$ ($n$ universal branches).

If \textit{$n$ is an odd number} then a strategy that $G$ may choose, 
to derive $w$, is to apply first rules $\psi^{-}_3$ of $\frac{n-1}{2}$ times, and then rules $\psi^{-}_5$ of $\frac{n-1}{2}+1$ times (in order to delete all $Z_2$'s from the s.form of $G_2$). It ends  the derivation 
with rule $\psi_{10}$. 

If \textit{$n$ is even} then a strategy that $G$ may choose is to apply alternatively 
the rules $\psi^x_5$ and $\psi_7$, performing $n$ zero-tests and ending with $\psi_{10}$.  
Hence, any combination of terminals in $tal(G_1)$ that satisfy $I$. is accepted (regular language). 

- $\wp_4$: ($\psi_7, \psi^x_5, \psi_{10}$) The system performs $n$ $0$-tests, similar for ($\psi_7, \psi^x_2, \psi^x_5, \psi_{10}$), while for ($\psi_7, \psi^x_5, \psi_8, \psi_{10}$) one $0$-test suffices ($\cal A$ returns $1$),  ($\psi_7, \psi^x_3, \psi_{10}$) or ($\psi^x_2, \psi^x_3, \psi_{10}$) are invalid systems (too many of $Z_2$'s, $\cal A$ returns $0$).



- $\wp_5$: ($\psi^x_2, \psi^x_5, \psi_{10}$) I. \textit{If each $x$ occurring in $w$ is produced in the same time by $\psi^x_2$ and $\psi^x_5$}, then $\cal A$ performs $n$ $0$-tests. II. \textit{If each $x$ occurring in $w$ is produced either by $\psi^x_2$ or by $\psi^x_5$}, then $\cal A$ checks (through universal branches) whether each $x_i$, where $i$ is odd, is produced by a rule $\psi^{x_i}_2$ and each $x_i$, where $i$ is even, is produced by rule $\psi^{x_i}_5$. III. \textit{If there are $x$ occurring in $w$ produced in the same time by rules of types $\psi^x_2$ and $\psi^x_5$, and there are $x$'s occurring in $w$ produced only by a rule of type $\psi^x_2$ or $\psi^x_5$}, then $\cal A$ checks whether $w$ contains at least one substring composed of consecutive symbols produced only by rules of type 
$\psi^{-}_2$ or of type $\psi^{-}_5$ (but not both). If it holds,  $\cal A$ does not accept the input (it returns $0$, and $1$ otherwise). 

- $\wp_6$: ($\psi_7, \psi^x_3, \psi_9, \psi_{10}$)  Due to rule $\psi_9$, $\cal A$ always accepts as long as each $x_i$ in $w$ is produced by a rule of type $\psi^{x_i}_3$, $1\leq i \leq n$ (universal branches). 

- $\wp_7$: ($\psi_7, \psi^x_3, \psi^x_5, \psi^x_6, \psi_9$) and \textit{each $x$ occurring in $w$ is produced in the same time 
by $\psi^x_3$ and $\psi^x_5$.} $\cal A$ proceeds similar as in $\wp_3$. It also checks if $x_n$ is produced by  $\psi^{x_n}_6$.


- $\wp_8$: ($\psi_7, \psi^x_3, \psi^x_4, \psi^x_5, \psi_{10}$). I. If \textit{$\psi^x_3$ and $\psi^x_5$ produce the same terminals, but distinct than the terminals produced by $\psi^x_4$} or if \textit{$\psi^x_3$, $\psi^x_4$, and $\psi^x_5$ produce the same terminals}, then $\cal A$ proceeds similar as in $\wp_3$, as the rules of type $\psi^x_4$ do not modify the number of $Z_2$'s on the s.form of $G_2$ (there will always be a derivation of $w$ in $G$ by the nondeterminism and $0$-tests). 
II. If \textit{each $x$ in $w$ is produced only by a rule in $\{\psi^x_3, \psi^x_4, \psi^x_5\}$} then  $w$ is of the form  $w=\alpha_1\beta_1\alpha_2\beta_2...\alpha_k\beta_k\alpha_{k+1}$ in which each substring  $\alpha_i$ is composed of terminals produced by rules of type $\psi^x_5$ and additionally of type $\psi^x_4$ (however these rules may be missing), and each substring $\beta_i$ is composed of terminals produced by rules of type $\psi^x_3$, and additionally, of type $\psi^x_4$. Denote by $\eta^{(5)}_i$ the number of terminals in $\alpha_i$ produced by rules of type $\psi^x_5$, and by $\eta^{(3)}_i$ the number of terminals in $\beta_i$ produced by rules of type $\psi^x_3$. Note that $\eta^{(5)}_1$ may be $0$. If $\eta^{(5)}_1>0$ then the ($RL_1$; $RB_c$)-system should perform $\eta^{(5)}_1$ $0$-tests, by alternatively applying a rule $\psi^x_5$, followed by $\psi_7$ (to recharge the counter). 
The same happens each time, at a certain snapshot (point of derivation), the number of terminals produced by rules of type $\psi^x_3$ is smaller than the number of terminals produced by rules of type $\psi^x_5$ (the rule $\psi_7$ allows the counter to be recharged at any $0$\hspace{0.1cm}-test). Hence, for this kind of system, an input $w$ can be decided when checking the $\beta_k$ segment. Note that, if $\eta^{(5)}_{k+1}=0$ then $w$ cannot be generated by $G$, as the counter cannot be emptied. Since a rule of type $\psi^x_4$ does not modify the number of $Z_2$'s on the s.form of $G_2$, $\cal A$ only checks whether terminals in $w$ that are not produced by rules of type $\psi^{-}_3$ and $\psi^{-}_5$, can be produced by rules of type $\psi^{-}_4$. 
Below we describe a sequential procedure that decides $w$.
\vspace{0.1cm}

First a variable $S$ is set to $\eta^{(3)}_1$, $\eta^{(3)}_{k+1}=0$, and $i=2$, \textbf{while}  $i\leq k+1$ \textbf{do} 
$\{$ \hspace{-0.1cm}\textbf{if}  $S+1 \leq \eta^{(5)}_i$ \textbf{then} $S$ is set to $\eta^{(3)}_i$ (because an $S+1$ number of $Z_2$'s are deleted by rules of type $\psi^x_5$ and $\eta^{(5)}_i-1-S$ $0$-tests are performed, hence $S$ is set to the next $\eta^{(3)}_i$) \textbf{else} $S$ is set to $S - \eta^{(5)}_i+\eta^{(3)}_i$ (the counter contains $S - \eta^{(5)}_i +\eta^{(3)}_i+1$ of $Z_2$'s),  $i$ is set to $i+1$ \textbf{$\}$}. The procedure returns $S$. 
\vspace{0.1cm}

The input $w$ can be derived by $G$ if $S=0$, as $S$ is set to $\eta^{(3)}_{k+1}=0$, only if the previous value of $S$ is smaller than $\eta^{(5)}_{k+1}-1$. 
For $\cal A$ to decide the input only the part of the routine starting from the point when $i$ is set to the maximum $j$, $j\leq k$, such that $S$ is set to $\eta^{(3)}_j$, after the last $0$-test, is important (no matter how many times the counter is recharged).       
To do so,  $\cal A$ spawns $k$ existential processes $\partial_j$, $1\leq j \leq k-1$, each of which holds a suffix of $w$ of the form 
$w_{j,k+1}=\beta_j\alpha_j\beta_{j+1}...\alpha_k\beta_k\alpha_{k+1}$. On each $\partial_j$, $\cal A$
checks whether $\sum^{k}_{i=j} \eta^{(3)}_i -\sum^{k}_{i=j+1} \eta^{(5)}_i +1 \leq  \eta^{(5)}_{k+1}$, i.e., whether the 
number of $Z_2$'s gathered on the counter, over the $\beta_l$ segments,  $j\leq l \leq k+1$, can be deleted by rules 
of type  $\psi^{-}_5$, used over the $\alpha_l$ segments, as long as no other $0$\hspace{0.1cm}-test is performed during the scanning of the suffix $w_{j,k+1}$. That is, whether  $\sum^{q}_{i=j} \eta^{(3)}_i -\sum^{q}_{i=j+1} \eta^{(5)}_i +1 > \eta^{(5)}_{q+1}$, for each $j\leq q \leq k-1$. These inequalities should be universally checked on each existential branch, through $O(k-j+1)$ universal processes, each of which testing only one inequality.       
The fan-out of the computation tree of $\cal A$ on the $k$ existential processes $\partial_j$, or on the 
$k-j+1$ universal branches $1\leq j \leq k-1$, is $O(n)$. It can be converted (using a divide and conquer procedure) into a computation tree of fan-out $2$ and height $O(\log n)$. The (iterative) summations and comparison operations in binary require $O(\log n)$ time and space, as these operations\footnote{Multiplication and addition of $n$ $n$-bit numbers are in ${\cal N}{\cal C}^1$.} are in ${\cal N}{\cal C}^1$. Hence, $\cal A$  decides $w$ in $O(\log n)$ time and space. 

\textit{When the correction rule $\psi_9$ is added to $\wp_8$} $\cal A$ accepts any combination of 
$a_i\in tal(G_1)$ in $w$ as the surplus of $Z_2$'s on the counter can be deleted by $\psi_9$ at the end of 
computation. In this case $\cal A$ universally checks whether each $x_i$ 
in $w=x_1x_2...x_n$ is produced solely by a single rule in $\wp_8$.  Other cases, e.g., when $\psi^{x}_6$ is added 
to $\wp_8$, and $\psi^{x}_3$, $\psi^{x}_4$, or $\psi^{x}_5$ are dropped from $\wp_8$, such that each $x$ is 
produced only by a rule in $\wp_8$, are treated in a similar manner. 

- $\wp_9$: ($\psi_7, \psi^x_3, \psi^x_5, \psi_{10}$) and \textit{each $x$ in $w$ is produced only by a rule in $\{\psi^x_3, \psi^x_5\}$} then $\cal A$ proceeds similar as in $\wp_8$ II. 

- $\wp_{10}$: ($\psi_7, \psi^x_3, \psi^x_5, \psi_{10}$) and \textit{there exist $x$'s in $w$  produced only by a rule in $\{\psi^x_3, \psi^x_5\}$ and there exist $x$'s produced in the same time by $\psi^x_3$ and  $\psi^x_5$.} 

Denote by $I=\{a_{i_1}, ...,a_{i_r}\}$, $H=\{a_{h_1}, ...,a_{h_p}\}$ and $L=\{a_{l_1}, ...,a_{l_q}\}$ the 
sets of terminals in $tal(G_1)$ such that each $a_{i_j}$, $1\leq j \leq r$, is produced only by the rule $\psi^{a_{i_j}}_3$, each $a_{h_j}$, $1\leq j \leq p$, is produced only by $\psi^{a_{h_j}}_5$, and each 
$a_{l_j}$, $1\leq j \leq q$, is produced in the same time by $\psi^{a_{l_j}}_3$ and $\psi^{a_{l_j}}_5$. 
Obviously, $H \cup I \cup L$ is a partition of $tal(G_1)$. 
In this case an input $w$ may be of the form\footnote{Some of the substrings of type $H_{-}$, $I_{-}$, or $L_{-}$ may be empty strings.}  $w=L_0H_0L_1\alpha_1...\alpha_j...\alpha_k\bar{L}_0\bar{H}_0\bar{L}_1$, where 
$\alpha_j=I_jL_{j_1}H_{j_1}L_{j_2}H_{j_2}...L_{j_q}H_{j_q}$, $1\hspace{-0.1cm}\leq j \leq k$, each 
$H_{-}$ is composed of terminals in $H$, each $I_{-}$ is composed of terminals in $I$,  and each $L_{-}$ is composed of terminals in $L$. Substrings of type $H_{-}$ may be resolved by $0$-tests and rule $\psi_7$, 
substrings of type $L_{-}$ may be resolved as in $\wp_{3}$.  The only substrings that may block the computation are of type $I_{-}$.


To check whether the $Z_2$'s symbols added on the counter by segments of type $I_{-}$ can be ``deleted in due time'', as in the case of systems of type $\wp_{8}$, $\cal A$ spawns $k$ existential processes $\partial_j$, 
$1\leq j \leq k$, each of which holds a suffix of $w$ of the form\footnote{The beginning of each $I_{-}$ segment may be considered as a marker in $w$.} $w_{j,k}=\alpha_j\alpha_{j+1}...\alpha_k\bar{L}_0\bar{H}_0\bar{L}_1$, $1\leq j \leq k-1$. $\cal A$ checks for $w_{j,k}$ similar conditions as in $\wp_{8}$, where the number of terminals produced by rules of type $\psi^{-}_{5}$ are counted, then when it is necessary to delete the exceeding $Z_2$'s produced by rules of type $\psi^{-}_3$ used in $I_{-}$, also over the segments of type $L_{-}$.     

If the number $\eta^{(3)}_l$ of $Z_2$'s produced by rules of type $\psi^{-}_3$ to generate a certain $I_l$ segment is smaller than the number $\eta^{(5)}_l$ of $Z_2$'s deleted by rules of type $\psi^{-}_{5}$ (used in $H_{l_1}$,..., $H_{l_q}$ and $L_{-}$ segments), then denote by $S^-_l= \eta^{(5)}_l-\eta^{(3)}_l$ (computed in binary). $\cal A$ performs $S^-_l$ $0$-tests for some of the $H_{l_1}$, ..., $H_{l_q}$ segments and applies the procedure described in $\wp_{3}$ for the $L_{-}$ segments, or it may use some of the $\psi^{-}_{5}$ rules to delete the surplus of $Z_2$'s produced in segments of type  $\alpha_{-}$ occurring in $w_{j,k}$ before $\alpha_l$ (if this surplus exists). 

If the  number $\eta^{(3)}_l$ of $Z_2$'s produced by rules of type $\psi^{-}_3$ to generate $I_l$ is greater than the 
number $\eta^{(5)}_l$ of $Z_2$'s deleted by rules of type $\psi^{-}_{5}$ used to generate the $H_{l_1}$,..., $H_{l_q}$ and $L_{-}$ segments, then denote by $S^+_l= \eta^{(3)}_l-\eta^{(5)}_l$ the surplus of $Z_2$'s (computed in binary). $\cal A$ checks further for the next segments of type  $\alpha_{l'}$ occurring in $w_{j,k}$, after $\alpha_l$, such that $\eta^{(3)}_{l'} < \eta^{(5)}_{l'}$. Denote by $S^-_{l'}=\eta^{(5)}_{l'} - \eta^{(3)}_{l'}$. $\cal A$ computes  $S^+_l - S^-_{l'}$, if  $S^+_l > S^-_{l'}$,  $l<l'$, and $S^+_l$ is set to $S^+_l - S^-_{l'}$. Otherwise $S^+_l$ is set to $0$ and $\cal A$ performs $S^-_{l'} -S^+_l$ $0$-tests on the $\alpha_{l'}$ segment. 


- $\wp_{11}$: ($\psi^x_2, \psi^x_3, \psi^x_5, \psi_{10}$) \textit{There exist $x$'s in $w$  produced only by a rule in 
$\{\psi^x_3, \psi^x_5\}$ and there exist $x$'s produced in the same time by $\psi^x_3$ and  $\psi^x_5$.}

$\wp_{11}$ $I.$ Any $x$ produced by a rule of type $\psi^{x}_2$ can be also produced by a rule of type $\psi^{x}_3$ or $\psi^{x}_5$, or both. 
Denote by $I=\{a_{i_1}, ...,a_{i_r}\}$, $H=\{a_{h_1}, ...,a_{h_p}\}$, and $L=\{a_{l_1}, ...,a_{l_q}\}$ the sets 
of terminals in $tal(G_1)$ such that each $a_{i_j}$, $1\leq j \leq r$, is produced only by the rule $\psi^{a_{i_j}}_3$, 
each $a_{h_j}$, $1\leq j \leq p$, is produced only by $\psi^{a_{h_j}}_5$, and each $a_{l_j}$, $1\leq j \leq q$, 
is produced in the same time by $\psi^{a_{l_j}}_3$ and $\psi^{a_{l_j}}_5$. For the sake of simplicity, we refer to a terminal 
$x\in X$, $X\in \{I,H,L\}$, that occurs in $w$, as $x\in X\cap w$. $\cal A$ spawns $n$ universal processes $\partial_j$, $1\leq j \leq n$. 

 $\diamond$ On $\partial_1$, $\cal A$ returns $1$ if there exists a rule $\psi^{x_1}_2$ producing $x_1$ and $0$ otherwise.  

 $\diamond$ On any $\partial_j$, $2\leq j \leq n-1$, such that $x_j\in I\cup L$, $\cal A$ returns $1$ by default. 

 $\diamond$ On any $\partial_j$, $2\leq j \leq n-1$, such that $x_j\in H$, $\cal A$ counts the number of terminals $x\in H\cup I$ occurring in $w_j=x_1x_2...x_{j-1}$. Suppose there exist $\eta^{(3)}_j$ terminals in $I\cap w_j$ and $\eta^{(5)}_j$ terminals in $H\cap w_j$. $\cal A$ checks whether $\eta^{(3)}_j-\eta^{(5)}_j\geq 0$, i.e., at the application of rule $\psi^{x_j}_5$ (producing $x_j$) the s.form of $G_2$ contains at least one $Z_2$. If it holds $\partial_j$ returns $1$. If $\eta^{(5)}_j\hspace*{-0.1cm}-\eta^{(3)}_j > 1$, $\cal A$ must use a rule of type $\psi^{-}_3$ 
to produce a terminal in $L\cap w_j$. Furthermore, $\cal A$ makes such an election for each process $\partial_m$, $m<j$, handling a terminal $x_m\in H\cap w_j$, with the property that $\eta^{(5)}_m-\eta^{(3)}_m > 1$. 
Hence, at process $\partial_j$, when choosing an $y \in L\cap w_j$ to be rewritten by $\psi^y_3$, $\cal A$ must assure that the number of terminals in 
$L\cap w_j$ is greater than the number of terminals handled by processes of type $\partial_m$, with $\eta^{(5)}_m-\eta^{(3)}_m > 1$\footnote{That is there are enough terminals in $L\cap w_j$ that can be produced by rules of type $\psi^{-}_3$  in order to cover the lack of $Z_2$'s that must be deleted by rules of type $\psi^{-}_5$ that produce only terminals in $H$, up to the $x_j$ symbol.}. 
To do so, $\cal A$ counts the number of terminals $x\in H\cap w_j$, with the property that for each fixed position $x_l$ of $x$ in $w$ (i.e., $x_l=x$) 
$\eta^{(5)}_l-\eta^{(3)}_l > 1$ holds. Suppose there exist $\eta^{(c,j)}_L$ terminals $x$ with this property (including $x_j$), i.e., up to the $j^{th}$ symbol in $w$, $\cal A$ must choose $\eta^{(c,j)}_L-1$ terminals in $L$ to be produced by rules of type $\psi^{-}_3$. $\cal A$ checks whether $\eta^{(j)}_L\geq \eta^{(c,j)}_L$, where $\eta^{(j)}_L$  is the number of terminals in $L\cap w_j$. On the other hand, each time $\eta^{(5)}_j-\eta^{(3)}_j -1 = 0$, a $0$-test is performed, hence A must
check whether $x_j$ can be produced by a rule of type $\psi^{x_j}_2$ (no matter whether $x_j$ can be also produced by a rule of type $\psi^{-}_3$ or $\psi^{-}_5$, or both. Process $\partial_j$ returns $1$ if these conditions hold.
 
 $\diamond$ On $\partial_n$ If $x_n\in I$, then $\partial_n$ returns $0$. 

{\it Case $\partial^1_n$} If $x_n\in H$ and $\eta^{(5)}_n-\eta^{(3)}_n=1$, then after applying $\psi^{x_n}_5$ 
there will be no $Z_2$'s on the s.form of $G_2$, and $G$ may apply rule $\psi_{10}$ to empty the stack. However, this 
is conditioned by the acceptance of the string composed of all terminals in $w\cap L$ that have not been chosen to be produced by rules of type $\psi^{-}_3$. Denote by $w_L$ the string composed of these terminals (no matter which the order is). The acceptance of $w_L$ is decided as in $\wp_3$. We have $|w_L|=\eta^{(n)}_L-\eta^{(c,n)}_L$, where $\eta^{(n)}_L$ is the number of terminals in $L\cap w$, and $\eta^{(c,n)}_L$ is the 
number of terminals in $L\cap w$ chosen to be produced by rules of type $\psi^{-}_3$. Note that, $\eta^{(c,n)}_L=\eta^{(c,k)}_L$, where $x_k$ is the rightmost  terminal in $H\cap w$ with the property $\eta^{(5)}_k-\eta^{(3)}_k > 1$. 

{\it Case $\partial^2_n$} If $x_n\in H$ and $\eta^{(5)}_n-\eta^{(3)}_n> 1$, $\cal A$ proceeds as in process $\partial_j$ 
($x_j\in H$, $\eta^{(5)}_j-\eta^{(3)}_j> 1$) to check whether there exist enough terminals in $L\cap w$ to be produced by rules 
$\psi^{-}_3$, and then as in $\partial^1_n$ to check whether $w_L$ is accepted by $\cal A$. 

{\it Case $\partial^3_n$} If $x_n\in H$ and $\eta^{(3)}_n\hspace*{-0.1cm}-\eta^{(5)}_n \geq 0$, then the exceeding $Z_2$'s existing in the s.form of $G_2$, after ``spelling'' the string composed of terminals in $(H\cup I)\cap w$ 
(preserving their order in $w$) must be deleted by using rules of type $\psi^{-}_5$ that produce terminals in $L\cap w$. To do so, $\cal A$ checks whether  
$\eta^{(n)}_L-\eta^{(c,n)}_L-1 \geq \eta^{(3)}_n-\eta^{(5)}_n$ and the string $w_L$, of length $|w_L|=\eta^{(n)}_L-\eta^{(c,n)}_L-1- (\eta^{(3)}_n-\eta^{(5)}_n)$ is valid as in process $\wp_3$, where $\eta^{(n)}_L$ is the number of terminals in $L\cap w$, and $\eta^{(c,n)}_L$ is the 
number of terminals in $L\cap w$ chosen to be produced by rules of type $\psi^{-}_3$. 

{\it Case $\partial^4_n$} If $x_n\in L$, then denote by $\eta^{(3)}_n$ the number of terminals in $I\cap w$, 
$\eta^{(5)}_n$ the number of terminals in $H\cap w$. $\cal A$ computes $|w_L|=\eta^{(n)}_L-\eta^{(c,n)}_L - 1-(\eta^{(3)}_n-\eta^{(5)}_n)$ and it checks whether $w_L$ is accepted by $\cal A$ (as in $\wp_3$). Recall that $G$ must apply rule $\psi^{x_n}_5$ to produce $x_n$. It ends the derivation with rule $\psi_{10}$. 
    
Note that, the reasoning for which a process $\partial_j$, that handles a terminal $x_j\in I\cup L$ always returns $1$, 
is the following. If $x_j$ is not preceded in $w$ by any $y\in H\cap w$, then applying $\psi^{x_j}_3$ to produce $x_j$ 
arises no problem ($\partial_j$ may return $1$). If $x_j$ is preceded in $w$ by some $y\in H\cap w$, then the application of $\psi^{x_j}_3$ may be disabled by the lack of $Z_2$'s in the s.form of $G_2$. However, this is resolved when checking if $\eta^{(3)}_l-\eta^{(5)}_l\geq 0$, where $y=x_l$ is the very last terminal that precedes $x_j$, occurring also in $H$. If $\eta^{(3)}_l-\eta^{(5)}_l < 0$, then at process $\partial_l$, $\cal A$ chooses a terminal in $L\cap w$ to be produced by a rule of type $\psi^{-}_3$, allowing the application of rule $\psi^{x_j}_3$ (and again $\partial_j$ may return $1$). If $\partial_l$ cannot find enough terminals in $L\cap w$ to be produced by 
a rule of type $\psi^{-}_3$, then $\partial_l$ returns $0$ \footnote{The correction is done when running, through another universal branch, the process $\partial_l$, case when $x_l \in H$ (already described).} which rectifies the possible error obtained when $\partial_j$ is set to $1$ ($\partial_{\_}$ are universal processes). 

$\wp_{11}$ $II.$ If $x$ is produced only by a rule of type $\psi^{x}_2$, then $x$ can be used as a $0$-test marker in $w$ and $\cal A$ checks whether the strings between these markers can be generated by a system composed only of rules of types  $\psi^{-}_3$ and $\psi^{-}_5$.


Other cases, e.g., when $\psi^{x}_6$ or $\psi_9$ are added to $\wp_8$, $\wp_9$, $\wp_{10}$ or $\wp_{11}$ and $\psi_{10}$ is dropped, are treated in a similar manner. 
If in all the above cases $\psi^{x}_1$ is added, and $x$ is produced only by $\psi^{x}_1$, then  again $x$ may be used as a marker in $w$ and $\cal A$ checks whether the substrings of $w$ placed between these markers can be generated by an ($RL_1$; $RB_c$)-system discussed above. 
If $x$ is produced in the same time by $\psi^{x}_1$,  $\psi^{x}_3$ or $\psi^{x}_5$ etc., then $\psi^{x}_1$ may be left to be used only for those $x_i$ in $w$ when $0$-tests may be produced by reading $x_i$, $x_{i+1}=x$ and then when using $\psi^{x}_1$ prevents too many of $Z_2$'s to be added on the counter, if instead the rule $\psi^{x}_3$ would be used, or in order to prevent less $Z_2$'s to be deleted, if instead the rule $\psi^{x}_5$ would be used. Otherwise, if $\psi^{x}_1$ has no ``corrections'' to perform on the number of $Z_2$'s in the s.form of $G_2$, then rule $\psi^{x}_1$ is simply ignored, and the system is judged as in the cases already discussed.    

The computation tree of $\cal A$ has a constant number of (existential or universal) levels (related to the length of the input) of fan-out ${\cal O}(n)$. Each level, in which $\cal A$ spawns ${\cal O}(n)$ existential or universal branches, can be unfolded (through a divide and conquer procedure) into a binary tree of height ${\cal O}(\log n)$. Auxiliary operations such as counting, addition, multiplication of binary numbers, require ${\cal O}(\log n)$ time and space (as these are in ${\cal N}{\cal C}^1$). Hence, the computation tree of $\cal A$ has depth ${\cal O}(\log n)$ and bounded fan-out. 

${\cal O}(\log n)$ space restriction on ${\cal B}_2$ forces $\psi_7$, $\psi_8$ or $\psi_9$ to be applied of at most polynomial number of times (related to the length of the input) without altering the language. 
If $\psi_7$ or  $\psi_8$ are applied in $G$ of an exponential number of times, then if the surplus of $Z_2$'s is deleted only by $\psi_9$ through at least one loop that does not contain any non-chain rule, ${\cal A}$ cuts off the (sequential) running time (i.e. space for the ATM) of these (unproductive) loops. Otherwise, the number of $Z_2$'s on the s.form of $G_2$ is linearly related to the length of the input, hence $n_z={\cal O}(n)$,     
and $O(\log n)$ space suffices to record in binary the number of $Z_2$'s on the s.form of $G_2$ at any snapshot.   
\end{proof}



\begin{corollary}\rm
${\cal L}(OC_1) ={\cal L}$($RL_1$; $RB_c$) $\subseteq {\cal N}{\cal C}^1 \subseteq SPACE(\log n)$.  
\end{corollary}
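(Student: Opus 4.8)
Since this is a corollary, the plan is to assemble it from facts already in hand rather than to prove anything from scratch; all of the combinatorial content has been discharged in Theorem 4.4. I would present the statement as a chain of inclusions and justify each link by a previously established result. The first link, the identity ${\cal L}(OC_1)={\cal L}(RL_1;RB_c)$, is stated just before Theorem 4.4; it is the one-control-state specialization of the equivalence ${\cal L}(OC)={\cal L}(RL;RB_c)$ of \cite{R2}. Restricting $ntal(G_1)$ to the single nonterminal $S_1$ corresponds exactly to allowing the one-counter automaton only one control state, and the translation is effective and language-preserving in both directions. I would verify only that this specialization respects the one-state restriction on each side, which is immediate from the underlying construction.

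The remaining links are obtained by citation. Theorem 4.4 supplies ${\cal L}(RL_1;RB_c)\subseteq ALOGTIME$, i.e.\ recognition by an indexing ATM in simultaneous $O(\log n)$ time and space. The characterization $ALOGTIME=U_{E^*}\text{-uniform }{\cal N}{\cal C}^1$ of \cite{BDG},\cite{R} and the inclusion $U_{E^*}\text{-uniform }{\cal N}{\cal C}^1\subseteq SPACE(\log n)$ of \cite{BDG} then complete the chain. Concatenating the four links yields
\[
{\cal L}(OC_1)={\cal L}(RL_1;RB_c)\subseteq ALOGTIME=U_{E^*}\text{-uniform }{\cal N}{\cal C}^1\subseteq SPACE(\log n),
\]
which is precisely the assertion of the corollary.

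The single point demanding care, and the only plausible obstacle, is uniformity bookkeeping. For circuit depth $\log^i n$ with $i\ge 2$ the ATM characterization of ${\cal N}{\cal C}^i$ is insensitive to the uniformity condition, but at level $i=1$ the equality with $ASPACE, TIME(\log n,\log n)=ALOGTIME$ holds specifically for $U_{E^*}$-uniformity. I would therefore make explicit that the symbol ``${\cal N}{\cal C}^1$'' in the corollary denotes the $U_{E^*}$-uniform class, so that Theorem 4.4 plugs into exactly the characterization of \cite{BDG},\cite{R} and no stronger uniformity is tacitly assumed. With that reading fixed, every link is a direct citation or an immediate specialization, and no additional computation is required.
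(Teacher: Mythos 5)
Your proposal is correct and follows exactly the route the paper intends: the corollary is an immediate chain of the equality ${\cal L}(OC_1)={\cal L}(RL_1;RB_c)$ stated before Theorem 4.4, the $ALOGTIME$ upper bound of Theorem 4.4, and the cited facts $ALOGTIME=U_{E^*}$-uniform ${\cal N}{\cal C}^1\subseteq SPACE(\log n)$. Your explicit remark that the uniformity must be $U_{E^*}$ at level $i=1$ matches the paper's own discussion and is the right point to flag.
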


Denote by ($RL^0_1$; $RB_c$) the ($RL^0$; $RB_c$)-counter nets systems for which $ntal(G_1)$ is restricted to only one nonterminal, i.e., ($RL_1$; $RB_c$)-counter systems that are not allowed to perform $0$-tests, and by OCN$_1$ one counter nets automata with only one control state. 
($RL^0_1$; $RB_c$)-systems may be composed only of rules of the forms $\wp^x_3$, $\wp^x_4$, $\wp^x_5$, $\wp_8$, $\wp_9$.  Hence we have 

\begin{corollary}\rm
${\cal L}(OCN_1) ={\cal L}$($RL^0_1$; $RB_c$) $\subseteq {\cal N}{\cal C}^1$.  
\end{corollary}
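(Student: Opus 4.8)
The plan is to obtain the statement as a specialization of Theorem 4.4, using the fact that an $(RL^0_1; RB_c)$-system is nothing but an $(RL_1; RB_c)$-system that is barred from ever rewriting the bottom marker $S_2$, and hence can never perform a $0$-test. First I would dispose of the language equality ${\cal L}(OCN_1)={\cal L}(RL^0_1; RB_c)$. From \cite{R2} we already have ${\cal L}(RL^0; RB_c)={\cal L}(OCN)$, and the discussion preceding Corollary 4.5 shows that collapsing $ntal(G_1)$ to the single nonterminal $S_1$ corresponds precisely to collapsing the simulated automaton to a single control state. Transporting this correspondence to the counter-net setting, where $al(G_2)=ntal(G_2)=\{Z_2\}$ and no $0$-test is available, yields ${\cal L}(OCN_1)={\cal L}(RL^0_1; RB_c)$ verbatim.

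For the inclusion ${\cal L}(RL^0_1; RB_c)\subseteq {\cal N}{\cal C}^1$, the cleanest route is to embed each $(RL^0_1; RB_c)$-system into an $(RL_1; RB_c)$-system generating the same language. By Definition 4.1(2) the rewrites of an $(RL^0_1; RB_c)$-system are drawn only from the $Z_2$-rules $\{\psi^x_3,\psi^x_4,\psi^x_5,\psi_8,\psi_9\}$, i.e., exactly those rules of the set ${\cal S}$ of Theorem 4.4 that act purely on $Z_2$ and leave $S_2$ untouched, together with the terminating rules that empty the counter through $Z_2\to\lambda$ (the $Z_2$-analogues of $\psi^x_6$ and $\psi_{10}$). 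I would reinstate the marker $S_2$ underneath the $Z_2$-stack, use the bootstrap rule $\psi_7$ once at the start to create the initial $Z_2$ (matching the counter-net axiom $Z_2$), run the derivation with the very same $Z_2$-rules, and close with $\psi_{10}$. Because the $RB$ selector $K=\Sigma^*\bar\Sigma$ always rewrites the rightmost symbol and $S_2$ sits strictly below every $Z_2$, the marker is never rewritten until the counter is empty; consequently the only $0$-test occurring in the simulating system fires at the very end, where it is vacuously satisfied. This shows ${\cal L}(RL^0_1; RB_c)\subseteq {\cal L}(RL_1; RB_c)$, whence Corollary 4.5, which gives ${\cal L}(RL_1; RB_c)\subseteq {\cal N}{\cal C}^1$, yields the inclusion.

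Equivalently, and matching the proof of Theorem 4.4 more directly, I would observe that the indexing ATM ${\cal A}$ built there already decides any system over ${\cal S}$ in ${\cal O}(\log n)$ time and space; restricting its rule repertoire to $\{\psi^x_3,\psi^x_4,\psi^x_5,\psi_8,\psi_9\}$ merely deletes the branches that test or empty the bottom marker, so that acceptance reduces to the balance predicate worked out in cases $\wp_8$~II and $\wp_9$, now without any recharging $0$-test. Concretely, an input $w$ is generated iff some assignment of net effects $+1/0/-1$ to the scanned symbols (with the chain rules $\psi_8,\psi_9$ supplying free adjustments within the ${\cal O}(\log n)$ space bound) keeps every prefix count nonnegative and drives the counter to $0$ at the end. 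This is a Dyck-type counting condition, and its iterated binary summations and comparisons lie in ${\cal N}{\cal C}^1$, so ${\cal A}$ still runs in logarithmic time and space. Since $ALOGTIME=U_{E^*}$-uniform ${\cal N}{\cal C}^1$, the inclusion follows and, with the equality above, establishes ${\cal L}(OCN_1)={\cal L}(RL^0_1; RB_c)\subseteq {\cal N}{\cal C}^1$.

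The step I expect to be the main obstacle is not the complexity bound, which is inherited almost verbatim, but the faithfulness of removing the $0$-test: I must verify that the free correction rules $\psi_8$ and $\psi_9$ cannot conspire to mimic a genuine $0$-test, so that the monotone (counter-net) reading of acceptance is exact and the prefix-balance predicate is both necessary and sufficient. In the embedding this amounts to checking that the rightmost-rewriting discipline of the $RB$ selector genuinely confines every rewriting of $S_2$ to a configuration with empty counter, so that no strings are admitted or excluded relative to the counter-net semantics.
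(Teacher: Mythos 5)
Your second route---observing that an $(RL^0_1; RB_c)$-system is built only from the rule types $\psi^x_3$, $\psi^x_4$, $\psi^x_5$, $\psi_8$, $\psi_9$ (together with the $Z_2$-terminating rules), so that the indexing ATM of Theorem 4.4 already covers it through the balance analysis of cases $\wp_8$~II and $\wp_9$---is exactly the paper's justification; the paper's entire proof of this corollary is the one-line remark preceding it, and the language equality is obtained, as you do, by restricting the correspondence of \cite{R2} to one nonterminal and one control state. On that route your argument is sound and the corollary follows.

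Your first route, however, which you present as the cleanest, has a genuine gap, and it is not the one you flag. The danger is not that $\psi_8$ and $\psi_9$ conspire to mimic a $0$-test (they rewrite only $Z_2$ and can never touch $S_2$); it is that $\psi_7$ itself cannot be confined to ``once at the start.'' With $ntal(G_1)=\{S_1\}$ there is no control state to forbid re-entry: the right-boundary selector makes $S_2$ rewritable exactly when the counter is empty, so whenever a derivation of your simulating system $G'$ drives the $Z_2$-count to $0$ in mid-derivation, $\psi_7$ may fire again and recharge. Hence $L(G')$ is, roughly, a concatenation closure of the excursion language of $G$ and may strictly contain $L(G)$. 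Concretely, take $G$ with rewrites $(S_1\rightarrow aS_1, Z_2\rightarrow Z_2Z_2)$, $(S_1\rightarrow bS_1, Z_2\rightarrow \lambda)$, $(S_1\rightarrow c, Z_2\rightarrow \lambda)$: every word of $L(G)$ contains $c$ only as its last letter, yet your $G'$ (with the terminating rule lifted to $(S_1\rightarrow cS_1, Z_2\rightarrow \lambda)$ and closed by $\psi_{10}$) derives $abcabc$. So the claimed equality $L(G)=L(G')$, and with it the inclusion ${\cal L}(RL^0_1; RB_c)\subseteq {\cal L}(RL_1; RB_c)$ obtained this way, is unproven; the note in Example 5.3 succeeds only because that particular language happens to be closed under the relevant concatenation. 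Since your second argument is self-contained, the corollary still stands, but the embedding should be dropped or repaired.
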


\section{Useful Examples} 

\begin{example}\rm
The ($RL^0$; $RB_c$)-counter nets system for which  $tal(G_1)=\{a,b\}$, $ntal(G_1)=\{S_1, X\}$, $al(G_2)=\{Z_2\}$, $ax(G_1)= S_1$,  $ax(G_2)= Z_2$,   and 
$R = \{(S_1\rightarrow aS_1, Z_2\rightarrow Z_2Z_2), (S_1\rightarrow X, Z_2\rightarrow \lambda), (X\rightarrow bX, Z_2\rightarrow \lambda), 
(X\rightarrow b, Z_2\rightarrow \lambda)\}$ generates the OCN language $L = \{a^nb^n |n\geq 1\}$, $L \notin {\cal L}(OCN_1)$.
\end{example}

\begin{example}\rm
The ($RL$; $RB_c$) system with  $al(G_1)=\{a,b\}$, $ntal(G_1)=\{S_1, X\}$, $al(G_2)=\{S_2, Z_2\}$, $ax(G_1)= S_1$, $ax(G_2)= S_2$ and 
$R = \{(S_1\rightarrow S_1, S_2\rightarrow S_2Z_2), (S_1\rightarrow aS_1, Z_2\rightarrow Z_2Z_2), (S_1\rightarrow X, Z_2\rightarrow \lambda), 
(X\rightarrow bX, Z_2\rightarrow \lambda), (X\rightarrow \lambda, S_2\rightarrow \lambda)\}$ generates the OCA language $L = \{a^nb^n |n\in \mathbb{N}\}$, $L \notin $ ${\cal L}(OC_1)$. 
\end{example}

\begin{example}\rm
The ($RL^0_1$; $RB_c$) system for which  $tal(G_1)=\{a,b\}$, $ntal(G_1)=\{S_1\}$, $al(G_2)=\{Z_2\}$,  and 
$R = \{(S_1\rightarrow aS_1, Z_2\rightarrow Z_2Z_2), (S_1\rightarrow bS_1, Z_2\rightarrow \lambda), (S_1\rightarrow \lambda, Z_2\rightarrow \lambda)\}$ 
generates the OCN$_1$ language $L = \{a^{n_1}b^{m_1}a^{n_2}b^{m_2}...a^{n_{k-1}}b^{m_{k-1}}a^{n_k}b^{m_k} | \sum_{i=1}^j n_i \geq \sum_{i=1}^j m_i, \\
1\leq j \leq k-1, \sum_{i=1}^k n_i = \sum_{i=1}^k m_i, k\geq 1, n_i, m_i\in \mathbb{N},  n_i \geq 1, 1\leq i\leq k \}$. 

Note that, if rules $(S_1\rightarrow aS_1, S_2\rightarrow S_2Z_2)$ and $(S_1\rightarrow  \lambda, S_2\rightarrow  \lambda)$ are added to $R$, and rule 
$(S_1\rightarrow \lambda, Z_2\rightarrow \lambda)$ is removed from $R$, then the rezulted system is an ($RL_1$; $RB_c$)-counter  system generating the same language.
\end{example}

\section{Conclusions}

Cts-systems were introduced in \cite{R1} as a unifying framework for several rewriting systems and automata. 
The main advantage of a cts-system is that it splits the simulated device into several subcomponents, each of which having to accomplish a precise task, such that all tasks together carry out the system's work. This modularization facilitates  deeper investigations into the nature of the simulated device, from a generative power and computational complexity perspective. In this paper, we have materialized this idea for Petri nets and counter machines. 

We proved, via cts-systems, that the class of languages generated by $\lambda$-free labeled marked Petri nets, known as ${\cal L}$-languages \cite{H}, is included in 
$NSPACE(\log n)$, while the class of languages generated by one-counter machines with only one control state 
(with or without $0$-tests) is included in $U_{E^*}$-uniform $\cal N$$\cal C$$^1$.  All upper bounds are tight and they bring new insights into the computational complexity of several subclasses of one counter languages, languages generated by VASS in fix dimension, and  Petri nets languages. 

One of our future aims is to extend this line of research upon the complexity of one counter (nets) languages, in order to tighten the known upper bound  $NSPACE(\log n)$ to $\cal N$$\cal C$$^1$. Even if sophisticated, we hope to combine methods in Theorem 4.4  with the state diagram used in Theorem 4.3 to reach a $SPACE(\log n)$ upper bound for  ${\cal L}(OC)$ or ${\cal L}(OCN)$. The method described in Theorem 3.2 combined with a proper state diagram defined for the general case of Petri nets similar to that provided by Definition 4.2, may lead to some improvements on the non-elementary upper bound of the ${\cal L}^{\lambda}_0$ class, as ${\cal L}^{\lambda}_0 =$  ${\cal L}$($RL$; $0S$) \cite{AR}. 

\renewcommand\refname{References}

\end{document}